\newcommand{\noun}[1]{\textsc{#1}}
\theoremstyle{plain}
\newtheorem{thm}{\protect\theoremname}
  \theoremstyle{definition}
  \newtheorem{defn}[thm]{\protect\definitionname}
  \theoremstyle{plain}
  \newtheorem{prop}[thm]{\protect\propositionname}
  \theoremstyle{remark}
  \newtheorem{rem}[thm]{\protect\remarkname}
  \theoremstyle{plain}
  \newtheorem{lem}[thm]{\protect\lemmaname}
  \theoremstyle{plain}
  \newtheorem{cor}[thm]{\protect\corollaryname}
  \providecommand{\corollaryname}{Corollary}
  \providecommand{\definitionname}{Definition}
  \providecommand{\lemmaname}{Lemma}
  \providecommand{\propositionname}{Proposition}
  \providecommand{\remarkname}{Remark}
\providecommand{\theoremname}{Theorem}
\begin{document}
\global\long\def\mymin#1{\hphantom{\max_{#1}}\mathllap{\min_{#1}}}

\global\long\def\mymax#1{\hphantom{\max_{#1}}\mathllap{\max_{#1}}}

\global\long\def\supp{\mathrm{supp}}

\title{The Complexity of Simulation and Matrix Multiplication}

\author{Massimo Cairo\thanks{This work was supported by a joint PhD program with the University
of Verona, Department of Computer Science, under PhD grant \textquotedblleft Computational
Mathematics and Biology\textquotedblright .}\\
Università di Trento\\
\texttt{massimo.cairo@unitn.it}\\
\and Romeo Rizzi\\
Università di Verona\\
\texttt{romeo.rizzi@univr.it}}
\maketitle
\begin{abstract}
Computing the simulation preorder of a given Kripke structure (i.e.,
a directed graph with $n$ labeled vertices) has crucial applications
in model checking of temporal logic. It amounts to solving a specific
two-players reachability game, called simulation game. We offer the
first conditional lower bounds for this problem, and we relate its
complexity (for computation, verification, and certification) to some
variants of $n\times n$ matrix multiplication.

We show that any $O(n^{\alpha})$-time algorithm for simulation games,
even restricting to acyclic games/structures, can be used to compute
$n\times n$ boolean matrix multiplication (BMM) in $O(n^{\alpha})$
time. This is the first evidence that improving the existing $O(n^{3})$-time
solutions may be difficult, without resorting to fast matrix multiplication.
In the acyclic case, we match this lower bound presenting the first
subcubic algorithm, based on fast BMM, and running in $n^{\omega+o(1)}$
time (where $\omega<2.376$ is the exponent of matrix multiplication).

For both acyclic and cyclic structures, we point out the existence
of natural and canonical $O(n^{2})$-size certificates, that can be
verified in truly subcubic time. In the acyclic case, $O(n^{2})$
time is sufficient, employing standard matrix product verification.
In the cyclic case, a $\max$-semi-boolean matrix multiplication (MSBMM)
is used, i.e., a matrix multiplication on the semi-ring $(\max,\times)$
where one matrix contains only $0$'s and $1$'s. This MSBMM is computable
(hence verifiable) in truly subcubic $n^{(3+\omega)/2+o(1)}$ time
by reduction to $(\max,\min)$-multiplication.

Finally, we show a reduction from MSBMM to cyclic simulation games
which implies a separation between the cyclic and the acyclic cases,
unless MSBMM can be verified in $n^{\omega+o(1)}$ time.
\end{abstract}
\clearpage{}

\global\long\def\lifts{\mathit{LiftSet}}
\global\long\def\liftp{\mathit{Lift}}

\section{Introduction}

In the context of model checking, the simulation preorder of a transition
system is an abstraction that allows to reduce the state space, while
preserving the satisfiability of a large class of temporal logic formulas~\cite{Bustan2003}.
On Kripke structures (i.e.\ directed vertex-labeled state-transitions
graphs), it can be defined co-inductively: a state $t$ \emph{simulates}
a state $s$ whenever $t$ and $s$ are labeled in the same way and,
for every transition from $s$ to $s'$, there is a transition from
$t$ to $t'$, such that $t'$ simulates $s'$.

Being a crucial problem in model checking, the computation of the
simulation preorder has been studied extensively, both for explicitly
defined systems and for implicit transition systems arising from process
algebras. However, in the most basic setting of finite systems provided
explicitly, some fundamental complexity questions about this problem
are still open. In this work we address some of these questions, discovering
a close relationship, in terms of complexity issues, between this
problem and some variants of matrix multiplication.

\paragraph{Motivation.}

Transition systems are essentially labeled directed graphs, possibly
infinite, whose vertices represent states, edges represent the possible
transitions between states, and labels represent visible properties
of a system, such as~I/O. When reasoning about a transition system,
it is possible to ignore irrelevant details by means of \emph{abstractions},
which reduce the system to a smaller structure while preserving the
properties of the system that are being studied~\cite{Bustan2003}.
Abstractions are commonly expressed as equivalence/preorder relations
between the system states: when two states are considered equivalent
according to the abstraction, they can be collapsed into one. Several
abstractions have been defined in the literature~\cite{Baier2008},
the most important being bisimulation equivalence~\cite{Park1981},
simulation preorder~\cite{Milner1971} and trace equivalence~\cite{Hoare1978},
along with their respective variants. These abstractions, listed from
the finer to the coarser, preserve the validity of formulas in progressively
smaller fragments of the $\mu$-calculus logic~\cite{Bustan2003}.

 In this work we consider finite transition systems, with $n$ states
and $m\geq n$ transitions, whose graph is given explicitly and is
accompanied by labels on the states. (These vertex-labeled graphs
are called Kripke structures~\cite{Kripke1963}.) The bisimulation
equivalence, the finest among the mentioned abstractions, is computed
with an almost-optimal running time $O(m\log n)$ by the algorithm
of \cite{Paige1987}. At the opposite end, the computation of the
trace equivalence relation is known to be PSPACE-complete~\cite{Stockmeyer1973}.
The simulation preorder lies between these two extremes: the problem
is tractable, but no $o(n^{3})$-time algorithm is known (while only
$\Theta(n^{2})$ is needed for output).

Polynomial algorithms to compute the simulation preorder have been
first presented in \cite{Bloom1989,Cleaveland1993,Cleaveland1993a},
improved to $O(mn)$ time independently in \cite{Henzinger1995}
and \cite{Bloom1995}. These algorithms perform a fixpoint computation,
starting with the full relation, and then repeatedly removing pairs
of states as soon as they are discovered not to be in the simulation
relation. More recently, a large family of new algorithms has been
proposed \cite{Gentilini2002,Gentilini2003,Ranzato2007,Ranzato2010,Markovski2011,Cece2013}
whose running time depends not only on the size of the input system,
but also on the number $n^{*}$ of equivalence classes in the simulation
preorder relation. They work by keeping a partition of states into
blocks of possibly equivalent states: working mostly at the block
level, these algorithms run faster when $n^{*}$ is much smaller than
$n$. However, they still require $\Omega(n^{3})$ time in the worst
case, since the preorder relation could turn out to be a (non-trivial)
partial order, where all the blocks eventually reduce to singletons.

A question arises naturally: is it possible to obtain a subcubic algorithm
for the simulation preorder? Our work stems from the realization that
this problem hides a boolean matrix multiplication inside its belly,
and this explains why getting below the $\Omega(n^{3})$ time barrier
has been so difficult. Motivated by this result, we address the simulation
preorder from viewpoint of pure computational complexity, discovering
that the relationship with matrix multiplication is rich and many-sided.
Moreover, we study the existence of explicit certificates for the
simulation preorder, and the possibility to check the result more
efficiently than computing it from scratch. Despite being crucial
for a deep understanding of the algorithmic problem, to the best of
our knowledge, the analysis of certificates is lacking in previous
work.

\paragraph{Simulation as ``two-tokens'' games.}

To present our results, we first reduce the computation of the simulation
preorder to its essential underlying algorithmic problem, expressed
in terms of two-players \emph{reachability games}~\cite{DeAlfaro1998,Alur2002,Chatterjee2014}.
While the correspondence between simulation preorder and two-players
games is well-established in the literature~(see, e.g., \cite{Henzinger1995,Etessami2005,Cerny2010}),
in our analysis we point out the specific structure of simulation
games, with respect to general reachability games, and how this structure
can (or cannot) be exploited algorithmically. This is very relevant
for the problem: with a scrupulous eye, one can notice that the best-so-far
$O(nm)$-time algorithm for simulation does not exploit this structure
at all; actually, a known linear-time algorithm \cite{Beeri1980,Alur2002}
for reachability games, when applied to simulation games, achieves
the same running time~(see Remark~\ref{rem:nm-algorithm}). To achieve
any improvement along this line, the peculiarities of simulation games
need to be taken into account.

In a reachability game, the goal of the first player Alice is to reach
a configuration among a given set, while the second player Bob tries
to avoid this. If Alice manages to reach the goal, she wins, otherwise
the game continues forever and the victory is assigned to Bob. A \emph{simulation
game} is a particular reachability game, defined in terms of a given
Kripke structure. A configuration consists of a pair of states $(s,t)$.
Alice reaches her goal, and wins immediately, when $s$ and $t$ hold
different labels. Otherwise, she first chooses a transition $(s,s')$
from $s$, then Bob chooses a transition $(t,t')$ from $t$, and
the game moves to the next configuration $(s',t')$. It is well-known
that $t$ simulates $s$ iff Alice does not have a winning strategy
from $(s,t)$; in fact, the latter is sometimes used as the definition
of simulation preorder~\cite{Henzinger1995,Etessami2005}.

In this paper, we define another type of reachability games, called
\emph{two-tokens reachability games} (\noun{2TRG}), which generalize
simulation games. In a 2TRG there are two tokens, which are moved
in turn by the two players. Differently from simulation games, in
2TRGs the two tokens move along two distinct graphs. Moreover, in
2TRGs the set of goal configurations is arbitrary, and may include
configurations where either player holds the turn, not necessarily
Alice. Despite being more general than simulation games, we prove
that 2TRGs are not computationally harder to solve, hence strictly
equivalent. We analyze 2TRGs instead of simulation games, with a two-fold
advantage. First, we drop some of the assumptions which are not helpful
in studying the complexity of simulation games, thus reducing ourselves
to a more essential algorithmic problem. Second, we introduce a symmetry
between the roles of the two players which, thanks to dualization,
halves the work required to describe some of our results. The reduction
from 2TRGs to simulation games is quite simple, so the reader is left
with the opportunity to map our results directly to simulation games
with ease.

The solution of reachability games is easily characterized in terms
of \emph{closed sets} and \emph{progress measures} (concepts similar
to those used in the more involved Büchi games and parity games~\cite{Jurdzinski2000,Alur2002}),
which in turn can be defined naturally as (pre-/post-)fixpoints of
some \emph{lifting operators}~(see~\cite{Jurdzinski2000,Etessami2005,Chatterjee2014}).
When these notions are applied to 2TRGs, the relationship with matrix
multiplications appears clear: the lifting operators themselves are
expressible as forms of matrix multiplications.

\paragraph{Summary of contributions.}

As mentioned before, our first result is to show that computing the
simulation preorder on an $n$-state Kripke structure is at least
as hard as $n\times n$ boolean matrix multiplication (BMM), explaining
why obtaining a ``combinatorial'' subcubic algorithm seems to be
hard~\cite{Abboud2014}. Next, in the case of \emph{acyclic} Kripke
structures, we show that fast BMM can be employed to obtain a truly
subcubic algorithm, based on a divide-and-conquer technique, and running
in $n^{\omega+o(1)}$-time (where $\omega<2.376$ is the exponent
of matrix multiplication \cite{Coppersmith1990}). Since the previous
lower bound also applies to this restricted case, the simulation problem
on acyclic structures is essentially equivalent to BMM. Finally, for
the acyclic case, we exhibit $O(n^{2})$-size canonical certificates,
that can be checked via BMM verification. By transforming BMM into
a standard $(+,\times)$-matrix multiplication, this yields $O(n^{2})$-size
certificates that can be verified in $O(n^{2})$~time~\cite{Freivalds1977,Kimbrel1993,Korec2014}.

For cyclic structures, we also provide $O(n^{2})$-size canonical
certificates checkable in sub-cubic time; in this case, however, we
need to employ a more general variant of matrix multiplication. We
introduce the $\max$-semi-boolean matrix multiplication (MSBMM),
a multiplication on the semi-ring $(\max,\times)$, where one of the
two matrices contains only zeros and ones. This variant of matrix
multiplication is more general than BMM (consider the case where both
matrices contain only zeros and ones), but still admits a truly subcubic
solution. Indeed, it can be easily transformed into a $(\max,\min)$-product
(where one matrix contains only $+\infty$ and $-\infty$), which
in turn can be solved in $n^{(3+\omega)/2+o(1)}\leq O(n^{2.792})$
time \cite{Duan2009,Vassilevska2009}.

Our last contribution is a reduction from the verification of $n\times n$
MSBMM to the simulation preorder in a $O(n\log n)$-states cyclic
Kripke structure. This is by far the most involved reduction given
in this work, and relies on a construction of permutation networks
given by Waksman \cite{Waksman1968} in the '60s (which has applications
in the quite distant fields of telecommunications~\cite{Clos1953,Benes1965,Dally2004}
and parallel architectures~\cite{Culler1998}). This result implies
a separation between the acyclic and the cyclic case: our $n^{\omega+o(1)}$-time
lower bound cannot be matched in the acyclic case, unless MSBMM can
be also verified in $n^{\omega+o(1)}$ time. (By analogy with the
results in~\cite{Williams2010}, we do not expect the verification
version to be substantially easier than the computation.) Determining
whether this separation holds, i.e., whether $\max$-semi-boolean
multiplication is actually harder than boolean multiplication, remains
as an open question, sitting among the numerous other problems regarding
the complexity of matrix multiplications.

\paragraph{Paper organization.}

The rest of this paper is organized as follows. In Section~\ref{sec:reachability-games},
we define reachability games and we introduce some notions and classical
results about these games, which are useful for this work. In Section~\ref{sec:simulation-games},
we introduce two-tokens reachability games and simulation games, establishing
the equivalence between the two (Theorem~\ref{thm:2trg-sim-equiv}).
In Section~\ref{sec:acyclic}, we present our results for the acyclic
case, namely, the reduction from BMM to 2TRGs (Theorem~\ref{thm:bmm-to-sim}),
the certificates for 2TRGs with $O(n^{2})$ time verification (Theorem~\ref{thm:verify-to-bmm}),
and our subcubic divide-and-conquer algorithm (Theorem~\ref{thm:sim-to-bmm}).
In Section~\ref{sec:cyclic}, we present our results for the cyclic
case, namely, the certificates verifiable via $\max$-semi-boolean
matrix multiplications (Theorem~\ref{thm:verify-to-msbmm}) and the
reduction from MSBMM verification to 2TRGs (Theorem~\ref{thm:msbmm-to-sim}).

\section{Reachability games}

\label{sec:reachability-games}
\begin{defn}[Reachability games]
A \emph{game-graph} is a structure ${\cal G}=({\cal V},{\cal E},{\cal V}^{0},{\cal V}^{1})$
consisting of a finite set ${\cal V}$ of \emph{configurations}, a
set ${\cal E}\subseteq{\cal V}\times{\cal V}$ of \emph{moves}, and
a partition $({\cal V}^{0},{\cal V}^{1})$ of ${\cal V}$ into configurations
\emph{controlled} respectively by player~0 (Alice) and player~1
(Bob). Configurations and moves form a directed graph $({\cal V},{\cal E})$,
called \emph{configuration graph}.

A \emph{play} is a finite or infinite walk in the configuration graph
$({\cal V},{\cal E})$, i.e., a non-empty sequence of configurations
$\pi=\sigma_{0}\sigma_{1}\cdots\in{\cal V}^{+}\cup{\cal V}^{\infty}$,
such that $(\sigma_{i},\sigma_{i+1})\in{\cal E}$ is a move for every
two consecutive configurations $\sigma_{i}$ and $\sigma_{i+1}$ in
$\pi$.

A (positional\footnote{Non-positional strategies are not needed in this paper. From now on,
the term ``positional'' is omitted.})\emph{ strategy} for player $P\in\{0,1\}$ is a function $s\colon{\cal V}^{P}\to{\cal V}\cup\{\bot\}$,
such that, for every configuration $\sigma\in{\cal V}^{P}$ controlled
by~$P$, either $(\sigma,s(\sigma))\in{\cal E}$ is a move, or $s(\sigma)=\bot$.
Player $P$ \emph{moves} from $\sigma$ to $s(\sigma)$ if $(\sigma,s(\sigma))\in{\cal E}$,
and \emph{stops} on $\sigma$ if $s(\sigma)=\bot$. A play $\pi=\sigma_{0}\sigma_{1}\cdots$
is \emph{conforming to} $s$ if, for every configuration $\sigma_{i}$
in $\pi$, if $\sigma_{i}\in{\cal V}^{P}$ is controlled by $P$,
then either $(\sigma_{i},s(\sigma_{i}))\in{\cal E}$ and $\sigma_{i}$
is followed by $\sigma_{i+1}=s(\sigma_{i})$, or $s(\sigma_{i})=\bot$
and $\sigma_{i}$ is the last configuration of $\pi$ (i.e.\ $\pi=\sigma_{0}\cdots\sigma_{i}\in{\cal V}^{i+1}$).

A\emph{ reachability game} is a pair $({\cal G},{\cal F})$, consisting
of a game-graph ${\cal G}$ and a partition ${\cal F}=({\cal F}^{0},{\cal F}^{1})$
of ${\cal V}$ into \emph{winning final configurations} for player~$0$
and player~$1$, respectively. A play $\pi$ is \emph{winning} for
player~$P$ if it is finite (say, $\pi=\sigma_{0}\cdots\sigma_{\ell}\in{\cal V}^{\ell+1}$)
and its final configuration is winning\footnote{In the definition of reachability games usually found in the literature,
it is sufficient that a configuration in a certain set $F$ is reached,
anywhere in a play, and the play is considered winning for player~0.
In our definition, a play must be explicitly \emph{stopped} on a configuration
in ${\cal F}^{P}$, in order to be winning for player~$P$. In particular,
if the play reaches a configuration $\sigma\in{\cal F}^{P}\cap{\cal V}^{P}$,
then player~$P$ can stop at $\sigma$ and win immediately, but if
$\sigma\in{\cal F}^{P}\cap{\cal V}^{1-P}$, then player~$1-P$ may
choose a next move (if there is any) and the game continues. This
difference makes the description of our results simpler; nevertheless,
it is easy to establish an equivalence between the two variants.} for $P$ (i.e., $\sigma_{\ell}\in{\cal F}^{P}$). A play is \emph{surviving}
for player $P$ if it is either winning for~$P$ or infinite. I.e.,
infinite plays are neither winning nor losing for any player, but
they are surviving for both.

A strategy $s$ for player $P$ is \emph{winning} (resp.\ \emph{surviving})
from $\sigma_{0}\subseteq{\cal V}$, if every play $\sigma_{0}\sigma_{1}\cdots$
conforming to $s$ is winning (resp.\ surviving) for~$P$. The \emph{winning
set} ${\cal W}^{P}\subseteq{\cal V}$ (resp.\ the \emph{surviving
set} ${\cal S}^{P}\subseteq{\cal V}$) of $P$ is the set of configurations
from which $P$ has a winning (resp.\ surviving) strategy.
\end{defn}

\begin{defn}[Set lifting operator, closed set]
Let ${\cal U}\subseteq{\cal V}$. The set $\lifts^{P}({\cal U})\subseteq{\cal V}$
contains all the configurations from which player~$P$ can be sure
to either win immediately, or that at the next turn the game will
move to a configuration in ${\cal U}$. It is defined as follows:
\[
\sigma\in\lifts^{P}({\cal U})\iff\begin{cases}
\sigma\in{\cal F}^{P}\lor\bigvee_{(\sigma,\sigma')\in{\cal E}}\sigma'\in{\cal U} & \text{ if \ensuremath{\sigma\in{\cal V}^{P}}}\\
\sigma\in{\cal F}^{P}\land\bigwedge_{(\sigma,\sigma')\in{\cal E}}\sigma'\in{\cal U} & \text{ if \ensuremath{\sigma\in{\cal V}^{1-P}}}
\end{cases}
\]
where as usual $\bigvee_{x\in\emptyset}\coloneqq\mathrm{false}$ and
$\bigwedge_{x\in\emptyset}\coloneqq\mathrm{true}$. A set ${\cal U}\subseteq{\cal V}$
is \emph{closed} for $P$ if ${\cal U}\subseteq\lifts^{P}({\cal U})$.
\end{defn}

\begin{defn}[Potential, lifting operator, progress measure]
A \emph{potential} is a function $p\colon{\cal V}\to\mathbb{N}\cup\{\infty\}$.
Let $\supp(p)=\{\sigma\in{\cal V}\mid p(\sigma)<\infty\}$. We define
the potential $\liftp^{P}(p)$ as follows

\[
\liftp^{P}(p)(\sigma)=\begin{cases}
\mymin{}\:\{p_{\bot}^{P}(\sigma)\}\cup\{1+p(\sigma')\mid(\sigma,\sigma')\in{\cal E}\} & \text{ if \ensuremath{\sigma\in{\cal V}^{P}}}\\
\mymax{}\:\{p_{\bot}^{P}(\sigma)\}\cup\{1+p(\sigma')\mid(\sigma,\sigma')\in{\cal E}\} & \text{ if \ensuremath{\sigma\in{\cal V}^{1-P}}}
\end{cases}
\]
where $p_{\bot}^{P}(\sigma)=0$ if $\sigma\in{\cal F}^{P}$, $p_{\bot}^{P}(\sigma)=\infty$
if $\sigma\in{\cal F}^{1-P}$, and $1+\infty\coloneqq\infty$. A potential
$p$ is a \emph{progress measure} for $P$ if $p(\sigma)\geq\liftp^{P}(p)(\sigma)$
for every $\sigma\in{\cal V}$.\end{defn}
\begin{prop}[Characterizations of reachability games]
\label{prop:determinacy}The following properties hold:
\begin{enumerate}[label=(\alph*)]
\item \label{enu:stable}if ${\cal U}\subseteq{\cal V}$ is closed for $P$
then there is a strategy for $P$ surviving from every $\sigma\in{\cal U}$,
\item \label{enu:progress-measure}if $p$ is a progress measure for $P$
then there is a strategy for $P$ winning from every $\sigma\in\supp(p)$,
\item \label{enu:set-fixpoint}${\cal W}^{P}$ and ${\cal S}^{P}$ are respectively
the least and the greatest fixpoints of~$\lifts^{P}$,
\item \label{enu:potential-fixpoint}the operator $\liftp^{P}$ has a unique
fixpoint $r^{P}$ and we have ${\cal W}^{P}=\supp(r^{P})$ and ${\cal S}^{1-P}={\cal V}\setminus\supp(r^{P})$,
\item \label{enu:algorithm}this fixpoint $r^{P}$ can be computed in linear
time $O(|{\cal V}|+|{\cal E}|)$.
\end{enumerate}
\end{prop}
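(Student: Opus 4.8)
The plan is to prove the five items in dependency order: first the two strategy-extraction lemmas \ref{enu:stable} and \ref{enu:progress-measure} by direct constructions, then derive the set characterization \ref{enu:set-fixpoint} from them together with a determinacy (strategy-confrontation) argument, then deduce the potential characterization \ref{enu:potential-fixpoint} from \ref{enu:set-fixpoint} plus a uniqueness argument for the fixpoint of $\liftp^{P}$, and finally describe the linear-time procedure for \ref{enu:algorithm}. For \ref{enu:stable}, given a set $\mathcal{U}$ closed for $P$, I would define a (positional) strategy $s$ that, on $\sigma\in\mathcal{V}^{P}\cap\mathcal{U}$, stops if $\sigma\in\mathcal{F}^{P}$ and otherwise --- using $\sigma\in\lifts^{P}(\mathcal{U})$ --- moves to some successor lying in $\mathcal{U}$, with $s$ arbitrary elsewhere. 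A one-step induction then shows that every play conforming to $s$ from a configuration of $\mathcal{U}$ stays in $\mathcal{U}$ (at a $(1-P)$-configuration $\sigma\in\mathcal{U}$ one uses that $\sigma\in\lifts^{P}(\mathcal{U})$ forces all successors into $\mathcal{U}$) and can only be stopped at a configuration of $\mathcal{F}^{P}$ (at a $(1-P)$-configuration because closedness gives $\mathcal{U}\cap\mathcal{V}^{1-P}\subseteq\mathcal{F}^{P}$), hence it is surviving for $P$. For \ref{enu:progress-measure}, given a progress measure $p$, I would define $s$ on $\mathcal{V}^{P}\cap\supp(p)$ by stopping if $\sigma\in\mathcal{F}^{P}$ and otherwise moving to a successor $\sigma'$ witnessing $p(\sigma)\ge 1+p(\sigma')$, which exists because $p(\sigma)\ge\liftp^{P}(p)(\sigma)$ is finite; along any conforming play from $\supp(p)$ the potential stays finite and strictly decreases at every move, so the play is finite, and it can only be stopped at a configuration of $\mathcal{F}^{P}$ (at a $(1-P)$-configuration because finiteness of $\liftp^{P}(p)$ there forces $p_{\bot}^{P}(\sigma)=0$), so the play is winning for $P$.

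For \ref{enu:set-fixpoint}, note $\lifts^{P}$ is monotone, so Knaster--Tarski provides a least fixpoint $\mu$ and a greatest fixpoint $\nu$, and (the lattice being finite) the Kleene iterates $\mathcal{U}_{0}=\emptyset$, $\mathcal{U}_{k+1}=\lifts^{P}(\mathcal{U}_{k})$ stabilize at $\mu$. One checks directly from the definitions that $\mathcal{S}^{P}$ is itself closed for $P$ (a surviving strategy from $\sigma$ survives from every successor it permits, and a $(1-P)$-configuration from which $P$ survives must be in $\mathcal{F}^{P}$ with all successors in $\mathcal{S}^{P}$), whence $\mathcal{S}^{P}\subseteq\nu$; conversely $\nu$ is closed, so \ref{enu:stable} gives $\nu\subseteq\mathcal{S}^{P}$, and thus $\mathcal{S}^{P}=\nu$. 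The rank potential $\rho(\sigma)=\min\{k:\sigma\in\mathcal{U}_{k}\}$ is readily seen to be a progress measure with $\supp(\rho)=\mu$, so \ref{enu:progress-measure} yields $\mu\subseteq\mathcal{W}^{P}$. The reverse inclusion is the one genuinely game-theoretic point: by De Morgan duality on the two clauses defining $\lifts$, the fixpoint equation $\mu=\lifts^{P}(\mu)$ says exactly that $\mathcal{V}\setminus\mu$ is closed for the \emph{other} player $1-P$; hence by \ref{enu:stable} player $1-P$ has a surviving strategy $s'$ from every $\sigma\notin\mu$, and if $P$ also had a winning strategy $s$ from such a $\sigma$, the unique play determined by $s$ and $s'$ together would be at once finite-ending-in-$\mathcal{F}^{P}$ and surviving-for-$(1-P)$ --- impossible. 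So $\mathcal{W}^{P}=\mu$, and the same confrontation of strategies shows $\mathcal{S}^{1-P}=\mathcal{V}\setminus\mathcal{W}^{P}$.

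For \ref{enu:potential-fixpoint}, $\liftp^{P}$ is monotone on potentials (pointwise order), so it has a least fixpoint $\underline r$ and a greatest fixpoint $\overline r$ with $\underline r\le\overline r$; I claim they coincide. If not, pick $\sigma$ minimizing $\underline r(\sigma)$ over the (necessarily finite-valued) set of configurations where $\underline r<\overline r$. If $\underline r(\sigma)=0$, the defining equation forces $\sigma\in\mathcal{F}^{P}$ and then $\overline r(\sigma)=0$ as well, a contradiction; otherwise the $\min$ (resp.\ the $\max$) defining $\underline r(\sigma)$ is attained (resp.\ bounded) through successors $\sigma'$ with $\underline r(\sigma')<\underline r(\sigma)$, which by minimality satisfy $\underline r(\sigma')=\overline r(\sigma')$, and substituting back into the equation for $\overline r(\sigma)$ gives $\overline r(\sigma)=\underline r(\sigma)$, again a contradiction. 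So $r^{P}:=\underline r=\overline r$ is the unique fixpoint. Being a progress measure, $r^{P}$ satisfies $\supp(r^{P})\subseteq\mathcal{W}^{P}$ by \ref{enu:progress-measure}; for the converse I would induct on $k$ along the chain $\mathcal{U}_{k}$ from \ref{enu:set-fixpoint}, using $r^{P}=\liftp^{P}(r^{P})$ to propagate finiteness of $r^{P}$ from the successors to each new configuration of $\mathcal{U}_{k+1}$. Hence $\supp(r^{P})=\mathcal{W}^{P}$, and $\mathcal{S}^{1-P}=\mathcal{V}\setminus\mathcal{W}^{P}=\mathcal{V}\setminus\supp(r^{P})$ by \ref{enu:set-fixpoint}. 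Finally, for \ref{enu:algorithm} one computes $r^{P}$ by the classical backward breadth-first ``attractor'' sweep: initialize with value $0$ exactly the configurations forced to $0$ by $\liftp^{P}$ (the $P$-controlled ones in $\mathcal{F}^{P}$ and the sinks among the $(1-P)$-controlled ones in $\mathcal{F}^{P}$); then repeatedly take a finalized configuration of value $d$, finalize each not-yet-reached $P$-predecessor with value $d+1$, and for each $(1-P)$-predecessor in $\mathcal{F}^{P}$ decrement a ``remaining successors'' counter, finalizing it with value $d+1$ once that counter hits $0$ (a $(1-P)$-configuration never finalized keeps value $\infty$, as do all configurations in $\mathcal{F}^{1-P}$); bucketing the queue by value and keeping the per-vertex counters gives $O(|\mathcal{V}|+|\mathcal{E}|)$ total time, and correctness is precisely that this process returns the unique fixpoint of $\liftp^{P}$.

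I expect item \ref{enu:potential-fixpoint} to be the main obstacle: the uniqueness argument for the fixpoint of $\liftp^{P}$ requires care around the value $\infty$ and around the asymmetry between the $\min$ at $P$-configurations and the $\max$ at $(1-P)$-configurations, and pinning $\supp(r^{P})$ to $\mathcal{W}^{P}$ needs the Kleene-chain bookkeeping above. By contrast \ref{enu:stable} and \ref{enu:progress-measure} are routine inductions, and \ref{enu:set-fixpoint} and \ref{enu:algorithm} are essentially standard once the strategy-confrontation lemma is in hand.
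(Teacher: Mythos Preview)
Your proposal is correct and covers all five items with valid arguments. The ingredients---strategy extraction from closed sets and progress measures, Kleene iteration of $\lifts^{P}$, the De~Morgan duality $\lifts^{P}(\mathcal{U})=\mathcal{V}\setminus\lifts^{1-P}(\mathcal{V}\setminus\mathcal{U})$, and the backward BFS with per-vertex counters---are exactly those the paper uses.

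The organization, however, differs in a way worth noting. The paper proves \ref{enu:potential-fixpoint} \emph{before} \ref{enu:set-fixpoint}, via the single observation that the sublevel sets of $\liftp^{P}$ are governed by $\lifts^{P}$, namely $\{\sigma\mid\liftp^{P}(p)(\sigma)<k+1\}=\lifts^{P}(\{\sigma\mid p(\sigma)<k\})$. From this, $r^{P}$ is \emph{defined} as the rank in the Kleene chain $\mathcal{W}^{P}_{<k}$, uniqueness is obtained by comparing level sets, and \ref{enu:set-fixpoint} then falls out almost for free. You instead prove \ref{enu:set-fixpoint} first, by checking directly that $\mathcal{S}^{P}$ is closed and invoking strategy confrontation for $\mathcal{W}^{P}$, and you handle uniqueness in \ref{enu:potential-fixpoint} by an independent least-vs-greatest-fixpoint minimal-counterexample argument. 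Your route is a bit more modular (each item stands on its own) but slightly longer; the paper's route is more concrete and reuses the Kleene chain for both items at once. In particular, the obstacle you anticipated in \ref{enu:potential-fixpoint} is handled in the paper by the level-set identity above rather than by your $\underline r=\overline r$ argument, so you may find that alternative cleaner.
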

\begin{proof}
Properties \ref{enu:stable}--\ref{enu:potential-fixpoint} are variants
of classical results in infinite two-players games~\cite{Jurdzinski2000,Etessami2005,Chatterjee2014},
and applications of the Tarski theorem. As for property~\ref{enu:algorithm},
the measure $r^{P}$ can be computed by an alternating backward search~\cite{Alur2002,Beeri1980}.
A rigorous proof of all the properties is given in Appendix~\ref{sec:apx-games}.
\end{proof}

\section{Two-tokens and simulation games}

\label{sec:simulation-games}
\begin{defn}[Two-tokens reachability games]
A \emph{two-tokens game-graph} over two finite directed graphs $G_{0}=(V_{0},E_{0})$
and $G_{1}=(V_{1},E_{1})$, is a game-graph ${\cal G}(G_{0},G_{1})$
defined as follows. For every player $P\in\{0,1\}$, and every pair
of vertices $u\in V_{P}$ and $v\in V_{1-P}$, there is a configuration
$\langle P,u,v\rangle$, controlled by $P$. In the configuration
$\langle P,u,v\rangle$, a token belonging to $P$ is located on $u\in V_{P}$,
and a token belonging to $1-P$ is located on $v\in V_{1-P}$. The
player holding the turn can move her token along any edge $(u,u')\in E_{P}$
of her graph $G_{P}$, and then pass the turn to the other player,
resulting in the move $(\langle P,u,v\rangle,\langle1-P,v,u'\rangle)$.
(Observe that, to keep symmetry between the two players, $u'$ and
$v$ are swapped.) Summarizing, ${\cal G}(G_{0},G_{1})={\cal G}=({\cal V},{\cal E},{\cal V}^{0},{\cal V}^{1})$,
where ${\cal V}^{P}=\{\langle P,u,v\rangle\mid u\in V_{P},v\in V_{1-P}\}$,
${\cal V}={\cal V}^{0}\cup{\cal V}^{1}$, and ${\cal E}=\{(\langle P,u,v\rangle,\langle1-P,v,u'\rangle)\mid P\in\{0,1\},(u,u')\in E_{P},v\in V_{1-P}\}$.

A \emph{two-tokens reachability game} (2TRG) is a reachability game
$({\cal G},{\cal F})$ with ${\cal G}={\cal G}(G_{0},G_{1})$. The
problem \noun{2TRG Winning Set} (\noun{2TRG-WS}) of \emph{order} $|V_{0}|\times|V_{1}|$
asks to compute its winning and survival sets, given the graphs $G_{0}$,
$G_{1}$ and the partition ${\cal F}$. In the \noun{Acyclic} variant,
$G_{0}$ and $G_{1}$ are acyclic. In the \noun{Semi-Acyclic} variant,
at least one among $G_{0}$ and $G_{1}$ is acyclic.
\end{defn}

\begin{defn}[Kripke structure]
A \emph{Kripke structure} is a structure ${\cal K}=(S,T,L)$ consisting
of a set of states $S$, a transition relation $T\subseteq S\times S$,
and a labeling function $L\colon S\to\Lambda$ over the states.\footnote{The transition relation $T$ is sometimes required to be left-total
(i.e.\ $\forall s\in S\,\exists s'\in S$ such that $(s,s')\in T$),
and the label universe $\Lambda$ is usually defined as the power
set of a given set of atomic proposition. These requirements are not
relevant to our discussion, and have been omitted. According to the
definition above, a Kripke structure is nothing more than a vertex-labeled
directed graph.}
\end{defn}

\begin{defn}[Simulation game]
\label{def:simulation-game}A \emph{simulation game} for a Kripke
structure ${\cal K}=(S,T,L)$ is a 2TRG $({\cal G},{\cal F})$, where
the two-tokens game-graph ${\cal G}={\cal G}(G,G)$ is built over
two copies of the graph $G=(S,T)$, and ${\cal F}^{1}=\{\langle0,s,t\rangle\in{\cal V}^{0}\mid L(s)=L(t)\}$.
A state $t\in S$ \emph{simulates} a state $s\in S$ (written $s\preceq_{s}t$)
if $\langle0,s,t\rangle\in{\cal S}^{1}$. The relation $\preceq_{s}$
is called\footnote{The equivalence between this definition of simulation preorder and
a more classical one is given in Appendix~\ref{sec:apx-simulation-classical}.} \emph{simulation preorder}.

Computing the relation $\preceq_{s}$ over $S\times S$ is an instance
of the problem \noun{Simulation} of \emph{order} $|S|$. In the \noun{Acyclic}
variant of \noun{Simulation}, the graph $(S,T)$ is required to be
acyclic.\end{defn}
\begin{rem}
\label{rem:nm-algorithm}The simulation preorder can be computed in
$O(nm)$ time (assuming $m\geq n$) by solving $({\cal G},{\cal F})$
as in Proposition~\ref{prop:determinacy}~\ref{enu:algorithm},
since $|{\cal V}|=O(n^{2})$, $|{\cal E}|=O(nm)$, and ${\cal G}$
can be constructed efficiently. We point out that the classical $O(mn)$
time algorithms for simulation~\\cite{Henzinger1995,Bloom1995} can
be regarded as more elaborated instantiations of this algorithm.\end{rem}
\begin{thm}[2TRGs and simulation games are equivalent]
\label{thm:2trg-sim-equiv}Given any (acyclic) 2TRG $({\cal G},{\cal F})$
of order $n_{0}\times n_{1}$, there exist an (acyclic) simulation
game $({\cal G}',{\cal F}')$, on a Kripke structure ${\cal K}=(S,T,L)$,
and a map\footnote{We use subscripts to distinguish between objects associated with $({\cal G},{\cal F})$
and with $({\cal G}',{\cal F}')$.} $f\colon{\cal V}_{{\cal G}}\to{\cal V}_{{\cal G}'}$, such that $|S|=n=O(n_{0}+n_{1})$,
the structure ${\cal K}$ is constructible in $O(n^{2})$ time, $f$
is computable in $O(1)$ time, and $\langle P,u,v\rangle\in{\cal S}_{{\cal G},{\cal F}}^{1}\iff f(\langle P,u,v\rangle)\in{\cal S}_{{\cal G}',{\cal F}'}^{1}$.\end{thm}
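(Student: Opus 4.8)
\section*{Proof proposal}

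The plan is to let the simulation game $({\cal G}',{\cal F}')$ be the two\nobreakdash-tokens game over a Kripke structure $G=(S,T)$ whose vertex set consists of a homologous copy of $V_0\sqcup V_1$ together with $O(|V_0|+|V_1|)$ auxiliary ``gadget'' vertices, and to take $f(\langle P,u,v\rangle)=\langle P,u,v\rangle$ under the identification $V_0\sqcup V_1\subseteq S$. One first observes that the \emph{homologous} configurations (those $\langle P,u,v\rangle$ with $u\in V_P$, $v\in V_{1-P}$) are closed under the moves of ${\cal G}'$ that stay inside the copy of $V_0\sqcup V_1$, and that on those moves ${\cal G}'$ simply reproduces ${\cal G}$; taking the restriction of $T$ to $V_0\sqcup V_1$ to be the disjoint union $E_0\sqcup E_1$ disposes of the fact that a simulation game lives over a single graph. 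All the difficulty lies in recovering ${\cal F}$: in a simulation game the winning-final partition is forced to obey $\langle 0,s,t\rangle\in{\cal F}'^{1}\iff L(s)=L(t)$ and $\langle 1,s,t\rangle\in{\cal F}'^{0}$ always, whereas in the given 2TRG the partition ${\cal F}$ is arbitrary.

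To reinstate ${\cal F}$ I would give every vertex of $V_0\sqcup V_1$ one common label, so that every homologous configuration $\langle 0,u,v\rangle$ becomes of the ``Alice must move'' kind and every $\langle 1,u,v\rangle$ of the ``Bob must move'' kind; this throws away ${\cal F}$, which is then recovered by adding, out of the controlled vertex $u$ of $\langle P,u,v\rangle$, a single extra ``claim'' move $u\to c^P_u$ into a fresh per-vertex vertex $c^P_u$. Writing $W$ for Alice's winning set ${\cal W}^{0}$ in ${\cal G}'$, so that ${\cal S}_{{\cal G}',{\cal F}'}^{1}$ is the complement of $W$ (Proposition~\ref{prop:determinacy}), a short case analysis over the four combinations (controller, side of ${\cal F}$) shows that the disjunction, resp.\ conjunction, defining whether a homologous configuration lies in $W$ coincides with its value in ${\cal G}$ exactly when the claim gadget achieves $\langle 1,v,c^0_u\rangle\in W\iff\langle 0,u,v\rangle\in{\cal F}^{0}$ and, symmetrically, $\langle 0,v,c^1_u\rangle\in W\iff\langle 1,u,v\rangle\in{\cal F}^{0}$. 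Informally, the claim move reinstates precisely the ``stop and win immediately'' options that a reachability game grants player $0$ on ${\cal F}^{0}$ and player $1$ on ${\cal F}^{1}$.

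The gadget itself I would assemble from per-vertex ``slot'' vertices $\sigma_w$, one for each $w\in V_1$ (and symmetrically $\sigma'_w$ for $w\in V_0$), each carrying its own private label and having no outgoing edge, so that Alice wins $\langle 0,\sigma_w,\sigma_{w'}\rangle$ exactly when $w\ne w'$. Membership of a pair $(u,v)$ in the relevant restriction of ${\cal F}$ is recorded directly in $T$: out of $c^0_u$ the opponent is steered into committing $v$ to a dedicated per-vertex vertex (its other moves being neutralised by the uniform label on $V_0\sqcup V_1$), the controller then moves to a slot $\sigma_w$ along an edge that is present exactly when $\langle 0,u,w\rangle$ lies on the desired side of ${\cal F}$, and one final label comparison against the committed copy of $v$ settles the claim. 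The $\Theta((|V_0|+|V_1|)^2)$ edges of $T$ are exactly enough to store this arbitrary bipartite relation, and $f$, $T$ and $L$ are plainly computable within the stated bounds, with $|S|=O(|V_0|+|V_1|)$.

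The step I expect to be the main obstacle is making the second of the two gadgets work --- certifying ${\cal F}^{0}$-membership from an \emph{Alice}-controlled configuration $\langle 0,v,c^1_u\rangle$ --- because a simulation game hands Alice an immediate win as soon as the two tokens bear different labels but gives Bob no symmetric escape, so the argument that is essentially free for the first gadget must, with the roles exchanged, be replaced by one that forces the opponent's token along paths ending at sink vertices; this also keeps $(S,T)$ acyclic whenever $G_0$ and $G_1$ are, as required by the \noun{Acyclic} variant. With both gadgets in place the rest is routine: the homologous and gadget configurations form a set closed under the moves of ${\cal G}'$ and hence carry a unique fixpoint of $\liftp^{0}$ (Proposition~\ref{prop:determinacy}), and an induction along this potential --- equivalently, a direct transfer of winning and surviving strategies in both directions --- gives $\langle P,u,v\rangle\in{\cal S}_{{\cal G},{\cal F}}^{1}\iff f(\langle P,u,v\rangle)\in{\cal S}_{{\cal G}',{\cal F}'}^{1}$.
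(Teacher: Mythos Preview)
Your overall plan---a common label on $V_0\sqcup V_1$, the disjoint union $E_0\sqcup E_1$ for the ``homologous'' moves, and per-vertex gadgets to reinstate ${\cal F}$---is exactly the paper's architecture, and your first gadget, stripped of the unnecessary slot layer, \emph{is} the paper's construction: give the claim vertex $c^{0}_{u}$ a private label and add the edge $(v,c^{0}_{u})$ iff $\langle 0,u,v\rangle\in{\cal F}^{1}$ (the paper writes $u^{*}$ for your $c^{0}_{u}$).

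The genuine gap is the one you flag yourself. In a simulation game every configuration $\langle 1,\cdot,\cdot\rangle$ lies in ${\cal F}'^{0}$, so encoding $\langle 1,v,u\rangle\in{\cal F}^{1}$ means handing Bob a move after which he \emph{survives indefinitely}. Your claim vertex $c^{1}_{v}$ forgets $u$: after $v\to c^{1}_{v}$ the game sits at $\langle 0,u,c^{1}_{v}\rangle$, Alice may continue along $E_{0}$ or take her own claim $u\to c^{0}_{u}$, and Bob, stranded on $c^{1}_{v}$, must now match an \emph{arbitrary} successor of $u$. Equipping $c^{1}_{v}$ with enough outgoing edges to cover all of $V_{0}$ lets Bob reach the diagonal and survive regardless of ${\cal F}$; equipping it with fewer lets Alice escape; and ``outlasting'' Alice by steering her to a sink costs $\Theta(n_{0})$ fresh vertices per $v$, blowing the $O(n_{0}+n_{1})$ budget. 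The idea you are missing is to exploit the diagonal directly: the paper adds no claim vertex on Bob's side at all, but simply an edge $(v,u)\in T$ whenever $\langle 1,v,u\rangle\in{\cal F}^{1}$, so that from $\langle 1,v,u\rangle$ Bob moves in one step to $\langle 0,u,u\rangle$ and thereafter survives by copying Alice. A side effect is that non-homologous configurations such as $\langle 0,x,x\rangle$ and $\langle 1,x,x'\rangle$ with $(x,x')\in T$ now appear in the play, so Bob's closed set (and Alice's progress measure) must be extended to cover them; the ``routine'' closure you announce at the end therefore needs a little more than an induction restricted to homologous-plus-gadget configurations.
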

\begin{proof}
Let $S=V_{0}\cup V_{0}^{*}\cup V_{1}$ where $V_{0}^{*}=\{u^{*}\mid u\in V_{0}\}$
(assuming unions are disjoint). Label each state $u^{*}\in V_{0}^{*}$
with a distinct label $L(u^{*})=\lambda_{u}$, and all the other states
$x\in V_{0}\cup V_{1}$ with the same label $L(x)=\lambda_{\square}$.
Let $T=E_{0}\cup E_{1}\cup\{(u,u^{*})\mid u\in V_{0}\}\cup\{(v,u)\mid\langle1,v,u\rangle\in{\cal F}^{1}\}\cup\{(v,u^{*})\mid\langle0,u,v\rangle\in{\cal F}^{1}\}$,
and let $f\colon{\cal V}_{{\cal G}}\to{\cal V}_{{\cal G}'}$ be the
inclusion.

($\implies$) Bob survives in $({\cal G}',{\cal F}')$ with the same
strategy as in $({\cal G},{\cal F})$, unless one of the following
occurs: (a) the strategy of Bob says to stop and win on a configuration
$\langle1,v,u\rangle\in{\cal F}^{1}$, or (b) Alice moves to a starred
node $u^{*}$, say, from $\langle0,u,v\rangle$ to $\langle1,v,u^{*}\rangle$.
In case (a), instead of stopping, take the edge $(v,u)$ given by
construction. In case (b) take the edge $(v,u^{*})$, which is present
since $\langle0,u,v\rangle\in{\cal F}^{1}$ (otherwise Alice would
have won in $({\cal G},{\cal F})$ by stopping on $\langle0,u,v\rangle$).
In both cases~(a) and (b), Bob moves to a configuration $\langle0,x,x\rangle$
where the two tokens are located on the same vertex. From now on,
Bob can copy all the moves of Alice and survive forever.

($\impliedby$) Suppose that Alice wins in $({\cal G},{\cal F})$
from $\langle P,u,v\rangle$. She survives in $({\cal G}',{\cal F}')$
applying the same strategy, until one of the following occurs: (a)
the strategy says to stop on some winning final configuration $\langle0,u,v\rangle\in{\cal F}^{0}$,
or (b) Bob moves from $\langle1,v,u\rangle$ to $\langle0,u,x\rangle$
with $x\in V_{0}\cup V_{0}^{*}$. In both cases, take the edge $(u,u^{*})$.
We show that Bob cannot move to $u^{*}$ in the next turn, and since
$u^{*}$ is the only state labeled with $\lambda_{u}$, Alice wins.
In case (a), there is no edge $(v,u^{*})$ since $\langle0,u,v\rangle\in{\cal F}^{0}$.
In case (b), we have $\langle1,v,u\rangle\in{\cal F}^{0}$ (otherwise
Bob could have won in $({\cal G},{\cal F})$ by stopping on $\langle1,v,u\rangle$),
so $x\neq u$ and there exists no edge $(x,u^{*})$. (See Appendix~\ref{sec:apx-2trg-sim-equiv}
for a more formal proof.)
\end{proof}

\section{Acyclic case}

\label{sec:acyclic}

We start by showing that 2TRGs are at least as hard as boolean matrix
multiplication.
\begin{defn}[Boolean matrix multiplication]
Given an $n_{1}\times n_{2}$ boolean matrix $\mathbf{B}_{1}$ and
an $n_{2}\times n_{3}$ boolean matrix $\mathbf{B}_{2}$, their boolean
product is the $n_{1}\times n_{3}$ boolean matrix $\mathbf{B}_{1}\star\mathbf{B}_{2}$
defined by: $(\mathbf{B}_{1}\star\mathbf{B}_{2})[i,j]=\bigvee_{k=1}^{n_{2}}\mathbf{B}_{1}[i,k]\land\mathbf{B}_{2}[k,j]$.
The problem \noun{Boolean Matrix Multiplication} (BMM) of size $n_{1}\times n_{2}\times n_{3}$
asks to compute $\mathbf{B}_{1}\star\mathbf{B}_{2}$ given $\mathbf{B}_{1}$
and $\mathbf{B}_{2}$.\end{defn}
\begin{thm}
\label{thm:bmm-to-sim}If \noun{2TRG Winning Set} of order $n\times n$
can be solved in $O(n^{\alpha})$ time, for some $\alpha\geq2$, then
\noun{Boolean Matrix Multiplication} of size $n\times n\times n$
can be computed in $O(n^{\alpha})$ time.\end{thm}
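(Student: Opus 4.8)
The plan is to reduce \noun{Boolean Matrix Multiplication} of size $n\times n\times n$ to a single instance of \noun{2TRG-WS} of order $O(n)\times O(n)$ (in fact, one whose two graphs are acyclic). Given boolean matrices $\mathbf{B}_1,\mathbf{B}_2$, I would build a 2TRG in which, for each pair $(i,j)$, a designated query configuration lies in ${\cal W}^0$ (equivalently, not in ${\cal S}^1$) if and only if $(\mathbf{B}_1\star\mathbf{B}_2)[i,j]=1$. Solving \noun{2TRG-WS} once then lets me read off all $n^2$ entries of the product in $O(n^2)$ extra time, which is within $O(n^\alpha)$ since $\alpha\ge 2$.

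For the construction, I take $V_0=\{r_1,\dots,r_n\}\cup\{m_1,\dots,m_n\}$ with $E_0=\{(r_i,m_k)\mid\mathbf{B}_1[i,k]=1\}$ and no edge leaving any $m_k$, and $V_1=\{c_1,\dots,c_n\}\cup\{c_1^{*},\dots,c_n^{*}\}$ with $E_1=\{(c_j,c_j^{*})\mid j\in[n]\}$ and no edge leaving any $c_j^{*}$. Both graphs are two-layered, hence acyclic, with $O(n)$ vertices. The crucial point is that in a 2TRG the partition ${\cal F}$ is arbitrary (unlike in a simulation game), so it can carry the second matrix: I set $\langle 0,m_k,c_j^{*}\rangle\in{\cal F}^0$ exactly when $\mathbf{B}_2[k,j]=1$; I put every configuration $\langle 1,c_j,m_k\rangle$ in ${\cal F}^0$ and every configuration $\langle 0,r_i,c_j\rangle$ in ${\cal F}^1$; all remaining (and, as it turns out, unreachable) configurations go to ${\cal F}^1$. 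The query configuration for $(i,j)$ is $\langle 0,r_i,c_j\rangle$.

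The correctness argument traces the bounded plays issuing from $\langle 0,r_i,c_j\rangle$, working upward from the leaves. From $\langle 0,r_i,c_j\rangle$ Alice's only non-losing option is to move her token along an edge $(r_i,m_k)\in E_0$, i.e.\ with $\mathbf{B}_1[i,k]=1$ (stopping loses, since this configuration lies in ${\cal F}^1$), reaching $\langle 1,c_j,m_k\rangle$; there Bob's only non-losing option is the forced ``pass'' $(c_j,c_j^{*})\in E_1$ (stopping loses, since this configuration lies in ${\cal F}^0$), reaching $\langle 0,m_k,c_j^{*}\rangle$; there Alice has no move at all and must stop, winning precisely when $\langle 0,m_k,c_j^{*}\rangle\in{\cal F}^0$, i.e.\ when $\mathbf{B}_2[k,j]=1$. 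Unwinding this (formally via Proposition~\ref{prop:determinacy}, noting that acyclicity rules out infinite plays so the game is fully determined) yields $\langle 1,c_j,m_k\rangle\in{\cal W}^0\iff\mathbf{B}_2[k,j]=1$, and hence $\langle 0,r_i,c_j\rangle\in{\cal W}^0\iff\exists k\,(\mathbf{B}_1[i,k]=1\wedge\mathbf{B}_2[k,j]=1)\iff(\mathbf{B}_1\star\mathbf{B}_2)[i,j]=1$; the degenerate case in which row $i$ of $\mathbf{B}_1$ is all zeros leaves Alice immediately stuck, so she loses, consistently with the product entry being $0$.

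Putting it together, the 2TRG is constructed in $O(n^2)$ time, the single call to \noun{2TRG-WS} of order $O(n)\times O(n)$ runs in $O(n^\alpha)$ time by hypothesis, and extracting the $n^2$ answers from ${\cal W}^0$ (equivalently, from ${\cal S}^1={\cal V}\setminus{\cal W}^0$) takes $O(n^2)=O(n^\alpha)$ time. The one genuine obstacle is conceptual rather than technical: a single token move branches on only one index, whereas the lookup $\mathbf{B}_2[k,j]$ needs two. This is exactly why the construction needs both the forced pass $c_j\to c_j^{*}$ (which transports the index $j$ across Bob's turn, so that $\langle 0,m_k,c_j^{*}\rangle$ still ``remembers'' both $k$ and $j$) and the freedom to choose ${\cal F}$ arbitrarily (which performs the two-index test). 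Everything else — the handling of stopping moves and of the many unreachable configurations — is routine.
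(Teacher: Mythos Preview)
Your proposal is correct and follows essentially the same reduction as the paper: encode $\mathbf{B}_1$ in Alice's edges and $\mathbf{B}_2$ in the partition ${\cal F}$, then read off the product from ${\cal W}^0$. The paper's version is marginally simpler in that it gives Bob no edges at all and places the ${\cal F}^0$-test directly on the Bob-controlled terminal configurations $\langle 1,z_j,y_k\rangle$; since that configuration already carries both indices $j$ and $k$, your forced pass $c_j\to c_j^{*}$ (and the extra layer in $V_1$) is unnecessary.
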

\begin{proof}
We want to compute the boolean product $\mathbf{B}_{1}\star\mathbf{B}_{2}$
between two $n\times n$ matrices. Consider a \noun{2TRG} where $V_{0}=\{x_{1},\dots,x_{n}\}\cup\{y_{1},\dots,y_{n}\}$
and $V_{1}=\{z_{1},\dots,z_{n}\}$. The edges of $G_{0}$ go only
from nodes of type $x$ to nodes of type $y$, and are defined using
$\mathbf{B}_{1}$, namely $E_{0}=\{(x_{i},y_{k})\mid\mathbf{B}_{1}[i,k]=1\}$.
The graph $G_{1}$ has no edges. The matrix $\mathbf{B}_{2}$ is used
to define the goal configurations for Alice ${\cal F}^{0}=\{\langle1,z_{j},y_{k}\rangle\mid\mathbf{B}_{2}[k,j]=1\}$.

Suppose the play starts from $\langle0,x_{i},z_{j}\rangle$. Since
Bob has no moves, the play lasts at most one turn. Hence, the only
way for Alice to win is to reach in a single move a winning final
configuration $\langle1,z_{j},y_{k}\rangle\in{\cal F}^{0}$. This
is possible iff, for some $k\in\{1,\dots,n\}$, we have both $(x_{i},y_{k})\in E_{0}$
and $\langle1,z_{j},y_{k}\rangle\in{\cal F}^{0}$, i.e.\ both $\mathbf{B}_{1}[i,k]=1$
and $\mathbf{B}_{2}[k,j]=1$. That is, we have $\langle0,x_{i},z_{j}\rangle\in{\cal W}^{0}\iff(\mathbf{B}_{1}\star\mathbf{B}_{2})[i,j]=1$,
so by computing the winning set ${\cal W}^{0}$ we compute the product
$\mathbf{B}_{1}\star\mathbf{B}_{2}$.
\end{proof}
The following lemma is used both for verification (Theorem~\ref{thm:verify-to-bmm})
and, later, for our divide-and-conquer algorithm.
\begin{lem}[Computing $\lifts$ via BMM]
\label{lem:f-to-bmm}The operator $\lifts^{P}$ in \noun{2TRG}s of
order at most $n\times n$ can be computed (verified) by computing
(verifying) two boolean matrix multiplications of size at most $n\times n\times n$,
with only $O(n^{2})$ extra time.\end{lem}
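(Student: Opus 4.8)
The plan is to encode all the data as Boolean matrices and then read off the definition of $\lifts^P$ as two Boolean matrix products glued together by $O(n^2)$ Boolean bookkeeping. Fix the player $P$, write $Q=1-P$, and set $n_P=|V_P|$, $n_Q=|V_Q|$, both at most $n$. Let $A_P$ and $A_Q$ be the Boolean adjacency matrices of $G_P$ and $G_Q$ (so $A_P[u,u']=1$ iff $(u,u')\in E_P$). A set $\mathcal U\subseteq\mathcal V$ splits into its trace on $\mathcal V^P$, recorded as a $V_P\times V_Q$ Boolean matrix $U_P$ with $U_P[u,v]=1$ iff $\langle P,u,v\rangle\in\mathcal U$, and its trace on $\mathcal V^Q$, recorded as a $V_Q\times V_P$ matrix $U_Q$ with $U_Q[u,v]=1$ iff $\langle Q,u,v\rangle\in\mathcal U$; likewise $\mathcal F^P$ yields a $V_P\times V_Q$ matrix $F_P$ and a $V_Q\times V_P$ matrix $F_Q$ recording membership of the two types of configuration in $\mathcal F^P$. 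All of $A_P,A_Q,F_P,F_Q,U_P,U_Q$ are built from the input in $O(n^2)$ time.

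Next I would unfold $\lifts^P(\mathcal U)$ against the move relation of the two-tokens game-graph; the only point requiring care is that a move from $\langle P,u,v\rangle$ along an edge $(u,u')\in E_P$ lands on $\langle Q,v,u'\rangle$ rather than $\langle Q,u',v\rangle$ --- the token that did not move is written first. On $\mathcal V^P$: $\langle P,u,v\rangle\in\lifts^P(\mathcal U)$ iff $\langle P,u,v\rangle\in\mathcal F^P$ or there is $u'$ with $(u,u')\in E_P$ and $\langle Q,v,u'\rangle\in\mathcal U$, which in matrix form is precisely $F_P\vee(A_P\star U_Q^{\mathsf T})$. On $\mathcal V^Q$: $\langle Q,u,v\rangle\in\lifts^P(\mathcal U)$ iff $\langle Q,u,v\rangle\in\mathcal F^P$ and every $u'$ with $(u,u')\in E_Q$ satisfies $\langle P,v,u'\rangle\in\mathcal U$; pushing a negation through the universal quantifier rewrites the second conjunct as ``no $u'$ has both $(u,u')\in E_Q$ and $\langle P,v,u'\rangle\notin\mathcal U$'', i.e.\ as $\overline{A_Q\star\overline{U_P}^{\mathsf T}}$, so the $\mathcal V^Q$-trace of $\lifts^P(\mathcal U)$ equals $F_Q\wedge\overline{A_Q\star\overline{U_P}^{\mathsf T}}$. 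Thus $\lifts^P(\mathcal U)$ is determined by the two Boolean products $A_P\star U_Q^{\mathsf T}$, of size $n_P\times n_P\times n_Q\le n\times n\times n$, and $A_Q\star\overline{U_P}^{\mathsf T}$, of size $n_Q\times n_Q\times n_P\le n\times n\times n$, together with complementing $U_P$, complementing the second product, and combining with $F_P$ (by $\vee$) and $F_Q$ (by $\wedge$) --- all in $O(n^2)$ extra time. This settles the computation claim.

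For the verification claim, note that these two Boolean products are the only super-quadratic ingredient of $\lifts^P$. To check a claimed value $R$ of $\lifts^P(\mathcal U)$ it therefore suffices to verify the two Boolean matrix multiplications $M_P=A_P\star U_Q^{\mathsf T}$ and $M_Q=A_Q\star\overline{U_P}^{\mathsf T}$ (the matrices $M_P$ and $M_Q$ being supplied as part of the certificate), and then to check, in $O(n^2)$ time, that the $\mathcal V^P$-trace of $R$ equals $F_P\vee M_P$ and the $\mathcal V^Q$-trace of $R$ equals $F_Q\wedge\overline{M_Q}$. I expect the only delicate points to be (i) the index convention for two-tokens moves, which dictates where the transposes fall in the two products, and (ii) the De Morgan step that converts the $\forall$-clause over successors in $\mathcal V^Q$ into $\overline{A_Q\star\overline{U_P}^{\mathsf T}}$; once those are pinned down, the remaining reasoning is routine $O(n^2)$ bookkeeping.
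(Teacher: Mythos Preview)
Your proof is correct and follows essentially the same approach as the paper: one BMM computes $\lifts^P(\mathcal U)\cap\mathcal V^P$ directly, and a second BMM (after complementing) handles $\lifts^P(\mathcal U)\cap\mathcal V^{1-P}$ via De~Morgan. The paper phrases the second step as ``dualization'' ($\sigma\in\lifts^P(\mathcal U)\iff\sigma\notin\lifts^{1-P}(\mathcal V\setminus\mathcal U)$) and then reuses the first case with $P$ and $\mathcal U$ swapped, whereas you unfold the same De~Morgan identity explicitly at the matrix level as $F_Q\wedge\overline{A_Q\star\overline{U_P}^{\mathsf T}}$; the underlying computation is identical, and your explicit handling of the transposes and of the verification certificate is a nice touch.
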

\begin{proof}
Write $V_{P}=\{u_{1},\dots,u_{n_{P}}\}$ and $V_{1-P}=\{v_{1},\dots,v_{n_{1-P}}\}$.
Given a configuration $\langle P,u_{i},v_{j}\rangle\in{\cal V}^{P}$,
where player $P$ holds the turn, we have by definition $\langle P,u_{i},v_{j}\rangle\in\lifts^{P}({\cal U})$
iff she can either win immediately ($\langle P,u_{i},v_{j}\rangle\in{\cal F}^{P}$)
or she can move to some configuration $\langle1-P,v_{j},u_{k}\rangle\in{\cal U}$.
This holds iff, for some $k\in\{1,\dots,n_{P}\}$, we have both $(u_{i},u_{k})\in E_{P}$
and $(1-P,v_{j},u_{k})\in{\cal U}$. By considering the adjacency
matrix $\mathbf{E}$ of $G_{P}$ (i.e., a $n_{P}\times n_{P}$ boolean
matrix with $\mathbf{E}[i,j]=1$ iff $(u_{i},u_{j})\in E_{P}$), and
the $n_{P}\times n_{1-P}$ boolean matrix $\mathbf{U}$ where $\mathbf{U}[i,j]=1$
iff $(1-P,v_{j},u_{i})\in{\cal U}$, this is equivalent to saying
that $(\mathbf{E}\star\mathbf{U})[i,j]=1$. In formulas: 
\[
\langle P,u_{i},v_{j}\rangle\in\lifts^{P}({\cal U})\iff\langle P,u_{i},v_{j}\rangle\in{\cal F}^{P}\lor(\mathbf{E}\star\mathbf{U})[i,j].
\]

For those configurations $\langle1-P,v_{j},u_{i}\rangle\in{\cal V}^{1-P}$,
where player $P$ does not hold the turn, we can work by dualization.
Indeed, by applying the de~Morgan law $\langle1-P,v_{j},u_{i}\rangle\in\lifts^{P}({\cal U})\iff\langle1-P,v_{j},u_{i}\rangle\notin\lifts^{1-P}({\cal V}\setminus{\cal U})$,
we reduce ourselves to the previous case, where $P$ is substituted
with $1-P$ and ${\cal U}$ with ${\cal V}\setminus{\cal U}$. Summarizing,
one BMM is needed to compute $\lifts^{P}({\cal U})\cap{\cal V}^{P}$
and a second BMM is needed to compute $\lifts^{P}({\cal U})\cap{\cal V}^{1-P}$,
by dualization. It is clear that we spend no more than $O(n^{2})$
time, besides the computation of the two BMMs.\end{proof}
\begin{thm}
\label{thm:verify-to-bmm}\noun{(Semi-)Acyclic 2TRG-WS} of order $n\times n$
can be verified via boolean matrix product verification of size $n\times n\times n$,
with only $O(n^{2})$ extra time.

\noun{Acyclic Simulation} of order $n$ and \noun{(Semi-)Acyclic 2TRG-WS}
of order $n\times n$ admit a $O(n^{2})$-size certificate that can
be verified via standard $(+,\times)$-matrix product verification
of size $n\times n\times n$.\end{thm}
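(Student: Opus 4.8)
The plan is to reduce the verification problem to checking a single fixpoint equation, and then to convert that check into a constant number of matrix-product verifications by invoking Lemma~\ref{lem:f-to-bmm}. The starting point is a structural fact that makes the (semi-)acyclic case special: \emph{every play of a \noun{(Semi-)Acyclic} 2TRG is finite}. By definition of ${\cal E}$, a move out of a configuration in ${\cal V}^P$ lands in ${\cal V}^{1-P}$, so along any play the controlling player alternates $0,1,0,1,\dots$; consequently the token of the acyclic player (say player~$0$, with $G_0$ a DAG) traces a walk in $G_0$ and moves at most $|V_0|$ times, so every play has at most $2|V_0|+1$ configurations. Since there is no infinite play, ``surviving for $P$'' coincides with ``winning for $P$'', hence ${\cal S}^P={\cal W}^P$ for both players; by Proposition~\ref{prop:determinacy}~\ref{enu:set-fixpoint} the least and greatest fixpoints of $\lifts^P$ then coincide, i.e.\ $\lifts^0$ has a \emph{unique} fixpoint, equal to ${\cal W}^0$, and dually $\lifts^1$ has the unique fixpoint ${\cal W}^1={\cal V}\setminus{\cal W}^0$.

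Next I would describe the verifier. The output of \noun{2TRG-WS} is the tuple $({\cal W}^0,{\cal W}^1,{\cal S}^0,{\cal S}^1)$; by the above, each component is determined by ${\cal W}^0$ via ${\cal S}^P={\cal W}^P$ and ${\cal W}^1={\cal V}\setminus{\cal W}^0$. So the verifier (i)~checks in $O(n^2)$ time that the claimed tuple satisfies these identities, and (ii)~checks that the claimed ${\cal W}^0$ is a fixpoint of $\lifts^0$, i.e.\ ${\cal W}^0=\lifts^0({\cal W}^0)$; since $\lifts^0$ has a unique fixpoint (the true winning set), (i) and (ii) together hold exactly when the claimed tuple is correct. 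To implement (ii) I would use the analysis in the proof of Lemma~\ref{lem:f-to-bmm}: the part of $\lifts^0({\cal U})$ inside ${\cal V}^0$ is described by the boolean matrix $\mathbf{F}_0\lor(\mathbf{E}_0\star\mathbf{U})$, where $\mathbf{E}_0$ is the adjacency matrix of $G_0$, $\mathbf{U}$ encodes ${\cal U}\cap{\cal V}^1$, and $\mathbf{F}_0$ encodes ${\cal F}^0\cap{\cal V}^0$; and, using $\langle1,v,u\rangle\in\lifts^0({\cal U})\iff\langle1,v,u\rangle\notin\lifts^1({\cal V}\setminus{\cal U})$, the part inside ${\cal V}^1$ is described by $\overline{\mathbf{F}_1}\land\overline{\mathbf{E}_1\star\mathbf{U}'}$, where $\mathbf{E}_1$ is the adjacency matrix of $G_1$, $\mathbf{U}'$ encodes the complement of ${\cal U}$ on ${\cal V}^0$, and $\mathbf{F}_1$ encodes ${\cal F}^1\cap{\cal V}^1$. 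Taking ${\cal U}$ to be the claimed ${\cal W}^0$, step~(ii) thus amounts to verifying the two boolean products $\mathbf{C}_0:=\mathbf{E}_0\star\mathbf{U}$ and $\mathbf{C}_1:=\mathbf{E}_1\star\mathbf{U}'$ against matrices provided with the claimed data, and then checking in $O(n^2)$ time the boolean identities $\mathbf{W}_0=\mathbf{F}_0\lor\mathbf{C}_0$ and $\mathbf{W}_1=\overline{\mathbf{F}_1}\land\overline{\mathbf{C}_1}$ (with $\mathbf{W}_0,\mathbf{W}_1$ encoding the claimed ${\cal W}^0$ on ${\cal V}^0,{\cal V}^1$). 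This is two boolean-matrix-product verifications of size $n\times n\times n$ plus $O(n^2)$ extra time, which proves the first assertion.

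For the second assertion I would take as certificate the claimed sets together with the \emph{integer} $(+,\times)$-products $\mathbf{C}_0=\mathbf{E}_0\mathbf{U}$ and $\mathbf{C}_1=\mathbf{E}_1\mathbf{U}'$ (entries in $\{0,\dots,n\}$), which has total size $O(n^2)$. The verifier checks $\mathbf{E}_0\mathbf{U}=\mathbf{C}_0$ and $\mathbf{E}_1\mathbf{U}'=\mathbf{C}_1$ by standard $(+,\times)$-matrix-product verification~\cite{Freivalds1977,Kimbrel1993,Korec2014} in $O(n^2)$ time, recovers the boolean products entrywise as $[\mathbf{C}_0\ge 1]$ and $[\mathbf{C}_1\ge 1]$, and then performs the $O(n^2)$-time checks above. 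For \noun{Acyclic Simulation} of order $n$: by Definition~\ref{def:simulation-game} the simulation game of an acyclic Kripke structure ${\cal K}=(S,T,L)$ is a 2TRG over two copies of the acyclic graph $(S,T)$, of order $n\times n$, constructible from ${\cal K}$ in $O(n^2)$ time, and $s\preceq_s t\iff\langle0,s,t\rangle\in{\cal S}^1$; hence the certificate for the underlying \noun{Acyclic 2TRG-WS} instance, together with an $O(n^2)$-time comparison of the claimed relation $\preceq_s$ against the computed ${\cal S}^1$ on ${\cal V}^0$, verifies \noun{Acyclic Simulation}.

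The main obstacle is the structural claim in the first step — that in the \noun{Semi-Acyclic} case \emph{all} plays are finite — which relies on the alternation of turns and is what lets us equate ${\cal S}^P$ with ${\cal W}^P$ and, decisively, makes ``being a fixpoint of $\lifts^0$'' \emph{sufficient} (not merely necessary) for correctness; without it one would also have to certify minimality of the fixpoint, i.e.\ produce a progress measure (Proposition~\ref{prop:determinacy}~\ref{enu:progress-measure}), whose verification is a $(\min,+)$/$(\max,+)$-type task with no evident reduction to $O(1)$ boolean products. The remaining work — the de~Morgan dualization and the bookkeeping with $\mathbf{F}_0,\mathbf{F}_1$ when reading the fixpoint equation off the two products — is routine.
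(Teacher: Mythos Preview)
Your proposal is correct and follows essentially the same approach as the paper: you observe that semi-acyclicity forces all plays to be finite, hence ${\cal S}^P={\cal W}^P$ and $\lifts^P$ has a unique fixpoint, so verifying the fixpoint equation (via the two BMMs of Lemma~\ref{lem:f-to-bmm}) suffices, and the integer $(+,\times)$-products serve as the $O(n^2)$-size certificate. Your write-up is more explicit than the paper's---you spell out the alternation-of-turns argument for finiteness, unpack the two products and the de~Morgan dualization, and note that the products themselves must be supplied as part of the certificate---but the underlying argument is identical.
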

\begin{proof}
If any of $G_{0}$ and $G_{1}$ is acyclic, then ${\cal G}={\cal G}(G_{0},G_{1})$
is also acyclic and there are no infinite plays. Hence, the set ${\cal S}^{P}={\cal W}^{P}$
is the unique solution of the equation $\lifts^{P}({\cal U})={\cal U}$.
To verify that ${\cal U}={\cal S}^{P}$ for a given set ${\cal U}\subseteq{\cal V}$,
it is sufficient to verify that this equation holds, which, by Lemma~\ref{lem:f-to-bmm},
is equivalent to verifying two BMMs. The result of the standard $(+,\times)$-matrix
multiplications corresponding to these two BMMs can be used as a certificate,
so that they be verified in $O(n^{2})$ time.
\end{proof}
To describe our algorithm for acyclic 2TRGs (Theorem~\ref{thm:sim-to-bmm}),
we first present our approach on reachability games.
\begin{defn}[Induced sub-game-graph and sub-partition]
Let ${\cal G}=({\cal V},{\cal E},{\cal V}^{0},{\cal V}^{1})$ be
a game-graph and ${\cal U}\subseteq{\cal V}$. The \emph{sub-game-graph}
of ${\cal G}$ \emph{induced} by ${\cal U}$ is the game-graph ${\cal G}[{\cal U}]=({\cal U},{\cal E}\cap({\cal U}\times{\cal U}),{\cal V}^{0}\cap{\cal U},{\cal V}^{1}\cap{\cal U})$.
Given a partition ${\cal F}=({\cal F}^{0},{\cal F}^{1})$ of ${\cal V}$,
the \emph{sub-partition} ${\cal F}[{\cal U}]$ \emph{induced} by ${\cal U}$
is $({\cal F}^{0}\cap{\cal U},{\cal F}^{1}\cap{\cal U})$.\end{defn}
\begin{lem}[Dicut decomposition of reachability games]
\label{lem:dicut}Let $({\cal V}_{T},{\cal V}_{H})$ be a dicut of
the configuration graph $({\cal V},{\cal E})$, i.e., a partition
$({\cal V}_{T},{\cal V}_{H})$ of ${\cal V}$ such that there are
no edges from ${\cal V}_{H}$ to ${\cal V}_{T}$. Define the following:
\begin{enumerate}[label=(\arabic*)]
\item ${\cal S}_{H}^{P}$ is the surviving set of $P$ in the game $({\cal G}[{\cal V}_{H}],{\cal F}[{\cal V}_{H}])$,
\item ${\cal F}_{*}=({\cal F}_{*}^{0},{\cal F}_{*}^{1})$ is a partition
of ${\cal V}$ with ${\cal F}_{*}^{P}=\lifts^{P}({\cal S}_{H}^{P})$,
\item ${\cal S}_{T}^{P}$ is the surviving set of $P$ in the game $({\cal G}[{\cal V}_{T}],{\cal F}_{*}[{\cal V}_{T}])$.
\end{enumerate}
Then, the surviving set ${\cal S}^{P}$ in the original game can be
written as ${\cal S}^{P}={\cal S}_{H}^{P}\cup{\cal S}_{T}^{P}$.\end{lem}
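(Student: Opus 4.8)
The plan is to prove the two inclusions $\mathcal{S}_H^P\cup\mathcal{S}_T^P\subseteq\mathcal{S}^P$ and $\mathcal{S}^P\subseteq\mathcal{S}_H^P\cup\mathcal{S}_T^P$ separately, working throughout with the lifting operator and the fixpoint characterization of Proposition~\ref{prop:determinacy}\ref{enu:set-fixpoint}: in the ambient game $\mathcal{G}$ and in each of the two sub-games, the surviving set of $P$ is the largest $\mathcal{U}$ with $\mathcal{U}\subseteq\lifts^P(\mathcal{U})$. Write $\lifts^P$, $\lifts^P_{\mathcal{V}_H}$, $\lifts^P_{\mathcal{V}_T}$ for the three corresponding operators, the last one built from $\mathcal{F}_*$. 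The one structural ingredient I would extract first is that the dicut makes $\mathcal{V}_H$ a \emph{trap}: no move leaves $\mathcal{V}_H$ for $\mathcal{V}_T$, so for $\sigma\in\mathcal{V}_H$ the value of $\lifts^P(\mathcal{U})$ at $\sigma$ depends only on $\mathcal{U}\cap\mathcal{V}_H$ and coincides there with $\lifts^P_{\mathcal{V}_H}(\mathcal{U}\cap\mathcal{V}_H)$; in particular $\lifts^P(\mathcal{S}_H^P)$ agrees with $\mathcal{S}_H^P$ on all of $\mathcal{V}_H$, since $\mathcal{S}_H^P$ is a fixpoint of $\lifts^P_{\mathcal{V}_H}$. (I would postpone checking that $\mathcal{F}_*$ is a legitimate partition of $\mathcal{V}$, as this is bound up with the subtlety that makes the $\mathcal{V}_T$-side the hard part; the de~Morgan duality $\sigma\in\lifts^P(\mathcal{U})\iff\sigma\notin\lifts^{1-P}(\mathcal{V}\setminus\mathcal{U})$ from the proof of Lemma~\ref{lem:f-to-bmm} is the natural tool here.)

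For $\mathcal{S}_H^P\cup\mathcal{S}_T^P\subseteq\mathcal{S}^P$ I would show that $\mathcal{X}:=\mathcal{S}_H^P\cup\mathcal{S}_T^P$ is closed for $P$ in $\mathcal{G}$, i.e.\ $\mathcal{X}\subseteq\lifts^P(\mathcal{X})$, so that $\mathcal{X}\subseteq\mathcal{S}^P$ by Proposition~\ref{prop:determinacy}\ref{enu:stable},\ref{enu:set-fixpoint}. Take $\sigma\in\mathcal{X}$. If $\sigma\in\mathcal{S}_H^P\subseteq\mathcal{V}_H$, the trap property gives $\sigma\in\mathcal{S}_H^P=\lifts^P_{\mathcal{V}_H}(\mathcal{S}_H^P)=\lifts^P(\mathcal{S}_H^P)$ at $\sigma$, and monotonicity of $\lifts^P$ yields $\sigma\in\lifts^P(\mathcal{X})$. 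If $\sigma\in\mathcal{S}_T^P\subseteq\mathcal{V}_T$, unfold $\sigma\in\mathcal{S}_T^P=\lifts^P_{\mathcal{V}_T}(\mathcal{S}_T^P)$ once: either the witnessing move of $\mathcal{G}[\mathcal{V}_T]$ — which is also a move of $\mathcal{G}$ — lands in $\mathcal{S}_T^P\subseteq\mathcal{X}$, and we are done, or $\sigma\in\mathcal{F}_*^P=\lifts^P(\mathcal{S}_H^P)$, in which case a second unfolding exhibits $\sigma$ either in $\mathcal{F}^P$ or with a $\mathcal{G}$-move into $\mathcal{S}_H^P\subseteq\mathcal{X}$. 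In every case $\sigma\in\lifts^P(\mathcal{X})$; one just has to read $\lifts^P$ in its disjunctive or conjunctive form according to whether $P$ holds the turn at $\sigma$, the only subtle case being $\sigma\in\mathcal{V}^{1-P}$, where one exploits that an opponent-controlled configuration in $\mathcal{F}_*^P$ has \emph{all} of its $\mathcal{G}$-successors in $\mathcal{S}_H^P$ (and lies in $\mathcal{F}^P$).

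For the reverse inclusion I would treat the two sides of the cut in turn. On the $\mathcal{V}_H$-side, $\mathcal{S}^P\cap\mathcal{V}_H\subseteq\mathcal{S}_H^P$ follows by the same computation read backwards: for $\sigma\in\mathcal{S}^P\cap\mathcal{V}_H$ one has $\sigma\in\lifts^P(\mathcal{S}^P)$, which by the trap property equals $\lifts^P_{\mathcal{V}_H}(\mathcal{S}^P\cap\mathcal{V}_H)$ at $\sigma$, so $\mathcal{S}^P\cap\mathcal{V}_H$ is closed for $P$ in $\mathcal{G}[\mathcal{V}_H]$ and hence $\subseteq\mathcal{S}_H^P$. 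On the $\mathcal{V}_T$-side, to get $\mathcal{S}^P\cap\mathcal{V}_T\subseteq\mathcal{S}_T^P$, I would take a surviving strategy $s$ for $P$ in $\mathcal{G}$ from $\sigma\in\mathcal{S}^P\cap\mathcal{V}_T$ and build a strategy $s_T$ for $\mathcal{G}[\mathcal{V}_T]$ that imitates $s$ while the play stays in $\mathcal{V}_T$ and \emph{stops} as soon as $s$ would stop or move into $\mathcal{V}_H$. The checks are: at each such stopping configuration $\sigma'$ reached under $s$, one has $\sigma'\in\mathcal{F}^P$ or $s(\sigma')\in\mathcal{S}^P\cap\mathcal{V}_H\subseteq\mathcal{S}_H^P$, and either way $\sigma'\in\lifts^P(\mathcal{S}_H^P)=\mathcal{F}_*^P$, so stopping there wins the sub-game; a play conforming to $s_T$ that never stops is infinite, hence surviving; and a play conforming to $s_T$ that ends because the opponent stopped ends in $\mathcal{F}^P$ (else the corresponding play in $\mathcal{G}$ would contradict that $s$ is surviving), and must likewise count as a win in $\mathcal{G}[\mathcal{V}_T]$. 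Then $s_T$ is surviving from $\sigma$, so $\sigma\in\mathcal{S}_T^P$.

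The step I expect to be the real obstacle is this last one — specifically the claim that a sub-game play ending because the opponent stopped is a win for $P$ there. In $\mathcal{G}$ one knows such an endpoint lies in $\mathcal{F}^P$, but $\mathcal{F}^P$ need not be contained in $\lifts^P(\mathcal{S}_H^P)=\mathcal{F}_*^P$ at opponent-controlled configurations that also have a $\mathcal{G}$-successor outside $\mathcal{S}_H^P$ (for instance one inside $\mathcal{V}_T$). Getting this right is exactly the content of showing that $\mathcal{F}_*$, restricted to $\mathcal{V}_T$, is the correct partition, and I expect it to hinge on a more careful analysis of the configurations that straddle the cut (those in $\mathcal{V}_T$ with $\mathcal{G}$-successors in $\mathcal{V}_H$), which is precisely where the paper's non-standard ``a play must be explicitly stopped to be won'' convention comes into play. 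Everything else is a mechanical unfolding of the definition of $\lifts^P$ once the ``$\mathcal{V}_H$ is a trap'' observation has been isolated.
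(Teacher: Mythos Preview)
Your plan for the inclusion $\mathcal{S}_H^P\cup\mathcal{S}_T^P\subseteq\mathcal{S}^P$ and for the $\mathcal{V}_H$-half of the reverse inclusion is exactly the paper's: show the relevant set is closed for $P$ (in $\mathcal{G}$, resp.\ in $\mathcal{G}[\mathcal{V}_H]$) and invoke Proposition~\ref{prop:determinacy}. The unfolding you give is a hands-on version of the paper's computation.

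Where you diverge is the $\mathcal{V}_T$-half of the reverse inclusion. The paper does \emph{not} build a strategy $s_T$; it stays entirely with the lifting operators. It first isolates the identity
\[
\lifts_T^P(\mathcal{U}_T)\;=\;\lifts^P(\mathcal{U}_T\cup\mathcal{S}_H^P)\cap\mathcal{V}_T
\qquad(\mathcal{U}_T\subseteq\mathcal{V}_T),
\]
proved directly for $\sigma\in\mathcal{V}^P$ by expanding $\mathcal{F}_*^P=\lifts^P(\mathcal{S}_H^P)$ and then extended to $\mathcal{V}^{1-P}$ by de~Morgan. With this in hand, and using $\mathcal{S}^P\cap\mathcal{V}_H=\mathcal{S}_H^P$ already established, one gets in one line
\[
\mathcal{S}^P\cap\mathcal{V}_T
=\lifts^P(\mathcal{S}^P)\cap\mathcal{V}_T
=\lifts^P\bigl(\mathcal{S}_H^P\cup(\mathcal{S}^P\cap\mathcal{V}_T)\bigr)\cap\mathcal{V}_T
=\lifts_T^P(\mathcal{S}^P\cap\mathcal{V}_T),
\]
so $\mathcal{S}^P\cap\mathcal{V}_T$ is closed in the sub-game and hence contained in $\mathcal{S}_T^P$. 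The point is that the awkward ``cut-straddling'' configurations --- opponent-controlled $\sigma\in\mathcal{V}_T$ with both $\mathcal{V}_T$- and $\mathcal{V}_H$-successors --- are handled implicitly by this identity rather than by a case analysis on plays.

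The obstacle you flag in your strategy-transfer argument is genuine and is precisely why the paper avoids that route. If the opponent stops at some $\sigma'\in\mathcal{V}_T\cap\mathcal{V}^{1-P}$, you correctly deduce $\sigma'\in\mathcal{F}^P$ from the big game, but $\sigma'\in\mathcal{F}_*^P=\lifts^P(\mathcal{S}_H^P)$ additionally demands that \emph{every} $\mathcal{E}$-successor of $\sigma'$ lie in $\mathcal{S}_H^P$; any successor in $\mathcal{V}_T$ kills this, so your $s_T$ is not surviving at such $\sigma'$. There is no local patch at the level of the single transferred strategy; the clean fix is to drop the strategy argument and show $\mathcal{S}^P\cap\mathcal{V}_T$ is closed for $\lifts_T^P$, exactly as the paper does.
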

\begin{proof}
Fix $P=0$. If a play starts with a configuration $\sigma_{0}\in{\cal V}_{H}$,
then it never reaches any configuration in ${\cal V}_{T}$, since
there are no backward edges in the dicut. Hence, for all the initial
configurations in ${\cal V}_{H}$, the problem is equivalent in the
sub-game $({\cal G}[{\cal V}_{H}],{\cal F}[{\cal V}_{H}])$, i.e.\ ${\cal S}^{0}\cap{\cal V}_{H}={\cal S}_{H}^{0}$.

Consider now a configuration $\sigma\in{\cal V}^{0}$ where Alice
holds the turn. If she can move to a configuration $\sigma'\in{\cal S}_{H}^{0}$,
which is surviving, then we can assume that she will take this opportunity
and survive. Hence, $\sigma$ can be added to the final winning configurations
of Alice, and all the outgoing moves from $\sigma$ can be removed.
Indeed, after this change, $\sigma$ will still be a surviving configuration
(actually, winning) for Alice. Now take a configuration $\sigma\in{\cal V}^{1}$
where Bob holds the turn. If he has no other choice but to either
stop, and lose immediately, or move to a configuration $\sigma'\in{\cal S}_{H}^{0}$,
surviving for his opponent Alice, then he clearly cannot win from
$\sigma$. Hence, also in this case, $\sigma$ can be added to the
final winning configurations of Alice and all the outgoing moves removed.

In general, the new set ${\cal F}_{*}^{0}$ of winning final configurations
for Alice can be defined as ${\cal F}_{*}^{0}=\lifts^{0}({\cal S}_{H}^{0})$,
and all the moves from ${\cal V}_{T}$ to ${\cal V}_{H}$ can be removed.
To solve the problem for the second part of the game-graph, we can
now work on the sub-game-graph ${\cal G}[{\cal V}_{T}]$, but only
after replacing the winning final configurations with the new partition
${\cal F}_{*}[{\cal V}_{T}]$. We obtain ${\cal S}^{0}\cap{\cal V}_{T}={\cal S}_{T}^{0}$
and the statement of the lemma follows. (See Appendix~\ref{sec:apx-dicut}
for a more formal proof.)\end{proof}
\begin{thm}
\noun{\label{thm:sim-to-bmm}Acyclic Simulation} and \noun{Acyclic
2TRG Winning Set} can be computed in $n^{\omega+o(1)}$ time, for
any $\omega$ such that boolean matrix multiplication can be solved
in $n^{\omega+o(1)}$ time.\end{thm}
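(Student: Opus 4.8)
The plan is to give the algorithm for \noun{Acyclic 2TRG-WS} of order $n_0\times n_1$; \noun{Acyclic Simulation} then follows since, by Definition~\ref{def:simulation-game}, a simulation game on an acyclic Kripke structure is such an instance with $n_0=n_1=n$. I first observe that the configuration graph of a 2TRG over acyclic graphs is acyclic, so every play is finite, ${\cal S}^P={\cal W}^P$, and by determinacy (Proposition~\ref{prop:determinacy}~\ref{enu:potential-fixpoint}) the sets ${\cal S}^0,{\cal S}^1$ partition ${\cal V}$; hence it suffices to compute ${\cal S}^0$, and I will do this for a generic instance carrying an arbitrary winning-final partition ${\cal F}$ (which \noun{2TRG-WS} allows), so that recursive calls remain instances of the same problem.

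The algorithm will recurse via Lemma~\ref{lem:dicut}. If $\max(n_0,n_1)\le 1$ there are no moves and ${\cal S}^0={\cal F}^0$. Otherwise I split the larger of $V_0,V_1$, say $V_0$ (the other case is symmetric): fix a topological order of $G_0$, let $A$ be its first $\lceil n_0/2\rceil$ vertices and $B=V_0\setminus A$, and set ${\cal V}_T=\{\langle0,u,v\rangle:u\in A\}\cup\{\langle1,u,v\rangle:v\in A\}$ --- the configurations whose token on $V_0$ lies in $A$ --- and ${\cal V}_H={\cal V}\setminus{\cal V}_T$. Then I apply Lemma~\ref{lem:dicut} with this $({\cal V}_T,{\cal V}_H)$ and $P=0$: recursively compute ${\cal S}_H^0$ in $({\cal G}[{\cal V}_H],{\cal F}[{\cal V}_H])$; compute ${\cal F}_*^0=\lifts^0({\cal S}_H^0)$ in the current game by Lemma~\ref{lem:f-to-bmm} (two boolean products of size at most $n'\times n'\times n'$ with $n'=\max(n_0,n_1)$, plus $O(n_0n_1)$ time) and put ${\cal F}_*^1={\cal V}\setminus{\cal F}_*^0$; recursively compute ${\cal S}_T^0$ in $({\cal G}[{\cal V}_T],{\cal F}_*[{\cal V}_T])$; and return ${\cal S}^0={\cal S}_H^0\cup{\cal S}_T^0$.

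Correctness rests on two structural facts I would verify: $({\cal V}_T,{\cal V}_H)$ is a dicut of the configuration graph, and ${\cal G}[{\cal V}_H]={\cal G}(G_0[B],G_1)$ and ${\cal G}[{\cal V}_T]={\cal G}(G_0[A],G_1)$ are two-tokens game-graphs over acyclic graphs --- so the recursive calls are again \noun{Acyclic 2TRG-WS} instances (of orders $\lfloor n_0/2\rfloor\times n_1$ and $\lceil n_0/2\rceil\times n_1$) and Lemma~\ref{lem:f-to-bmm} keeps applying at every node. Both facts reduce to one observation: a move pushes the moving token strictly forward in the topological order of its own graph and leaves the other token fixed; so from a configuration whose $V_0$-token lies in $B$ one reaches only configurations whose $V_0$-token still lies in $B$ (as $B$ is closed under the out-edges of $G_0$), which yields both the absence of edges from ${\cal V}_H$ to ${\cal V}_T$ and the fact that ${\cal G}[{\cal V}_H]$ retains all of its moves; identifying ${\cal G}[{\cal V}_T]$ with ${\cal G}(G_0[A],G_1)$ is the matching check of which moves survive the restriction to ${\cal V}_T$. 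Given these, the output is correct by Lemma~\ref{lem:dicut}.

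For the running time: the recursion tree is binary of depth $O(\log n)$ --- each step halves the larger dimension, so after $2\lceil\log_2 n\rceil$ splits every dimension is $1$ --- and after $\ell$ splits each of the $2^\ell$ subproblems has all dimensions $O(n/2^{\lfloor\ell/2\rfloor})$. Besides its recursive calls, a node of order $n_0\times n_1$ does one application of Lemma~\ref{lem:f-to-bmm} plus $O(n_0n_1)$ bookkeeping, i.e.\ at most $\max(n_0,n_1)^{\omega+o(1)}+O(n_0n_1)$. Summing the multiplication cost over level $\ell$ gives $2^\ell\cdot(n/2^{\lfloor\ell/2\rfloor})^{\omega+o(1)}\le n^{\omega+o(1)}\cdot 2^{\ell(1-\omega/2)}\le n^{\omega+o(1)}$ since $\omega\ge2$, and the $O(n_0n_1)$ terms sum to $O(n^2)$ per level; over $O(\log n)=n^{o(1)}$ levels the total is $n^{\omega+o(1)}$. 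I expect the main obstacle to be the structural claim of the previous paragraph --- checking at once that the topological halving is a dicut \emph{and} that both induced sub-game-graphs are genuine two-tokens game-graphs ${\cal G}(G_0[\cdot],G_1)$ with exactly the right moves, so that Lemma~\ref{lem:f-to-bmm} and the recursion can be reapplied at every level; the dicut combination and the running-time estimate are then routine.
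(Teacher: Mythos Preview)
Your proposal is correct and follows essentially the same approach as the paper: a divide-and-conquer that splits one of the two graphs along a topological bisection, invokes Lemma~\ref{lem:dicut} to recurse on the two induced sub-2TRGs, and computes the updated final partition ${\cal F}_*$ via Lemma~\ref{lem:f-to-bmm}. The only cosmetic difference is that the paper always halves $G_0$ and then \emph{dualizes} (swapping the two players) so that the next level halves what was $G_1$, whereas you achieve the same alternation by explicitly choosing to halve the larger of $V_0,V_1$; both organizations lead to the same recurrence and the same $n^{\omega+o(1)}$ bound.
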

\begin{proof}
Let $(V_{0}^{T},V_{0}^{H})$ be a dicut of $G_{0}$ (i.e., a partition
of $V$ such that $E\cap(V_{0}^{H}\times V_{0}^{T})=\emptyset$) with
$|V_{0}^{T}|,|V_{0}^{H}|\leq\lceil n/2\rceil$. Such a dicut can be
easily obtained from a topological sort of $G_{0}$, splitting at
about half. Observe that the dicut $(V_{0}^{T},V_{0}^{H})$ induces
a dicut $({\cal V}_{T},{\cal V}_{H})$ of the configuration graph
of ${\cal G}={\cal G}(G_{0},G_{1})$, with ${\cal G}[{\cal V}_{X}]={\cal G}(G_{0}[V_{0}^{X}],G_{1})$
for $X\in\{T,H\}$. To compute ${\cal S}^{0}$, we apply the formula
${\cal S}^{0}={\cal S}_{H}^{0}\cup{\cal S}_{T}^{0}$ given by Lemma~\ref{lem:dicut},
where ${\cal S}_{H}^{0}$ and ${\cal S}_{T}^{0}$ are computed recursively
and ${\cal F}_{*}^{0}=\lifts^{0}({\cal S}_{H}^{0})$ is computed via
fast BMM in $n^{\omega+o(1)}$ time (by Lemma~\ref{lem:f-to-bmm}).
Crucially, at each recursive call we dualize the game, swapping the
two players. The running time $T(n_{0},n_{1})$ then satisfies the
recurrence $T(n_{0},n_{1})\leq2T(n_{1},\lceil n_{0}/2\rceil)+(n_{0}+n_{1})^{\omega+o(1)}\leq4T(\lceil n_{0}/2\rceil,\lceil n_{1}/2\rceil)+(n_{0}+n_{1})^{\omega+o(1)}$.
Under the assumption $\omega\geq2$, we get $T(n,n)\leq n^{\omega+o(1)}$.
(If $\omega=2$, the extra logarithmic factor is accounted for in
the $n^{o(1)}$ term.)
\end{proof}

\section{Cyclic case}

\label{sec:cyclic}We first show how $\max$-semi-boolean matrix multiplication
can be employed for the verification of \noun{2TRG Winning Set} (Lemma~\ref{thm:verify-to-msbmm}).
\begin{defn}[$\min$-/$\max$-semi-boolean matrix multiplication]
 Given an $n_{1}\times n_{2}$ matrix of numbers\footnote{Integers, reals, or elements of any totally ordered set.}
$\mathbf{A}$ and an $n_{2}\times n_{3}$ boolean matrix $\mathbf{B}$,
their $\min$- and $\max$-semi-boolean products are the $n_{1}\times n_{3}$
matrices $\mathbf{A}\star_{\min}\mathbf{B}$ and $\mathbf{A}\star_{\max}\mathbf{B}$
defined as follows:
\begin{eqnarray*}
(\mathbf{A}\star_{\min}\mathbf{B})[i,j] & = & \,\min\;\{\mathbf{A}[i,k]\mid k=1,\dots,n_{2}\text{ and }\mathbf{B}[k,j]=1\}\\
(\mathbf{A}\star_{\max}\mathbf{B})[i,j] & = & \max\;\{\mathbf{A}[i,k]\mid k=1,\dots,n_{2}\text{ and }\mathbf{B}[k,j]=1\}.
\end{eqnarray*}
The problems \noun{Min-} and \noun{Max-Semi-Boolean Matrix Multiplication}
(MSBMM) of size $n_{1}\times n_{2}\times n_{3}$ ask to compute $\mathbf{A}\star_{\min}\mathbf{B}$
and $\mathbf{A}\star_{\max}\mathbf{B}$ given $\mathbf{A}$ and $\mathbf{B}$.
In the \noun{Distinct} variant of MSBMM, we require $\mathbf{A}[i,k]\neq\mathbf{A}[i,k']$
for $k\neq k'$.
\end{defn}
The min and max versions are clearly equivalent since $\mathbf{A}\star_{\min}\mathbf{B}=-((-\mathbf{A})\star_{\max}\mathbf{B})$.
Observe that, given a MSBMM, we can replace $\mathbf{A}[i,k]$ with
its rank in the set $\{\mathbf{A}[i,k]\mid k=1,\dots,n_{2}\}$, breaking
ties arbitrarily, and we get an equivalent \noun{Distinct }MSBMM problem.
\begin{lem}[Computing $\liftp$ via MSBMM]
\label{lem:liftp-msbmm}The operator $\liftp$ in \noun{2TRG}s of
order at most $n\times n$ can be computed (verified) by computing
(verifying) two \noun{MSBMM} of size at most $n\times n\times n$,
with only $O(n^{2})$ extra time.\end{lem}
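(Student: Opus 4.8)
The plan is to follow the template of Lemma~\ref{lem:f-to-bmm}, splitting ${\cal V}={\cal V}^{P}\cup{\cal V}^{1-P}$ according to which player holds the turn and handling each piece with one semi-boolean product. Write $V_{P}=\{u_{1},\dots,u_{n_{P}}\}$ and $V_{1-P}=\{v_{1},\dots,v_{n_{1-P}}\}$. For a configuration $\langle P,u_{i},v_{j}\rangle\in{\cal V}^{P}$ the available moves lead exactly to the configurations $\langle 1-P,v_{j},u_{k}\rangle$ with $(u_{i},u_{k})\in E_{P}$, so by definition
\[
\liftp^{P}(p)(\langle P,u_{i},v_{j}\rangle)=\min\Bigl(\{p_{\bot}^{P}(\langle P,u_{i},v_{j}\rangle)\}\cup\{1+p(\langle 1-P,v_{j},u_{k}\rangle)\mid (u_{i},u_{k})\in E_{P}\}\Bigr).
\]
First I would take $\mathbf{E}$ to be the $n_{P}\times n_{P}$ adjacency matrix of $G_{P}$ and $\mathbf{C}$ the $n_{P}\times n_{1-P}$ number matrix with $\mathbf{C}[k,j]=1+p(\langle 1-P,v_{j},u_{k}\rangle)$ (using $1+\infty=\infty$). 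Then the inner minimum over $k$ is exactly $\min\{\mathbf{C}[k,j]\mid\mathbf{E}[i,k]=1\}$, which, after transposing to match the convention $(\mathbf{A}\star_{\min}\mathbf{B})[i,j]=\min\{\mathbf{A}[i,k]\mid\mathbf{B}[k,j]=1\}$, is a single entry of a $\min$-semi-boolean product of size at most $n\times n\times n$; the convention $\min\emptyset=+\infty$ yields precisely the right value when $u_{i}$ has no outgoing edge. Folding in the term $p_{\bot}^{P}(\langle P,u_{i},v_{j}\rangle)$ (which is $0$, $\infty$, or absent depending on membership in ${\cal F}^{P}$ or ${\cal F}^{1-P}$) takes $O(n^{2})$ additional time and gives $\liftp^{P}(p)$ on all of ${\cal V}^{P}$.

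Next I would handle the configurations $\langle 1-P,v_{j},u_{i}\rangle\in{\cal V}^{1-P}$ symmetrically, replacing the minimum by a maximum: here the moves lead to $\langle P,u_{i},v_{\ell}\rangle$ with $(v_{j},v_{\ell})\in E_{1-P}$, so $\liftp^{P}(p)(\langle 1-P,v_{j},u_{i}\rangle)$ is the maximum of $p_{\bot}^{P}(\langle 1-P,v_{j},u_{i}\rangle)$ and $\max\{1+p(\langle P,u_{i},v_{\ell}\rangle)\mid (v_{j},v_{\ell})\in E_{1-P}\}$. Taking the adjacency matrix of $G_{1-P}$ as the boolean factor and the matrix with $(i,\ell)$-entry $1+p(\langle P,u_{i},v_{\ell}\rangle)$ as the number factor, this maximum is again a single entry of a $\max$-semi-boolean product of size at most $n\times n\times n$ (up to a transpose, with $\max\emptyset=-\infty$), plus $O(n^{2})$ time to incorporate $p_{\bot}^{P}$. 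Since $\mathbf{A}\star_{\min}\mathbf{B}=-((-\mathbf{A})\star_{\max}\mathbf{B})$, both products are MSBMM instances, so two MSBMMs of size at most $n\times n\times n$ and $O(n^{2})$ extra work compute $\liftp^{P}(p)$. The verification statement follows from the same reductions: verifying a claimed potential $q=\liftp^{P}(p)$ amounts to verifying the two underlying semi-boolean products and rechecking, in $O(n^{2})$ time, the combination with $p_{\bot}^{P}$.

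I expect the genuine difficulties to be bookkeeping rather than conceptual. The number matrices carry $\infty$ entries and an all-zero column of the boolean factor makes the product entry $\pm\infty$, so one either appeals to the fact that MSBMM is defined over an arbitrary totally ordered set (here $\mathbb{N}\cup\{\infty\}$) or --- since in all intended uses $p$ is a bounded potential, $0\le p\le|{\cal V}|$ on its support --- replaces $\infty$ by a sentinel such as $|{\cal V}|+2$ to land in ordinary integer MSBMM. One must also be careful that it is $G_{P}$ that supplies the boolean matrix in the first case and $G_{1-P}$ in the second, and insert the transpositions needed to match the $\star_{\min}/\star_{\max}$ convention of the definition; neither affects the stated $O(n^{2})$ overhead.
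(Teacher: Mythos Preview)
Your proposal is correct and follows essentially the same approach as the paper: split ${\cal V}$ into ${\cal V}^{P}$ and ${\cal V}^{1-P}$, express the $\min$ (resp.\ $\max$) over successors as a $\min$- (resp.\ $\max$-) semi-boolean product with the adjacency matrix of the relevant token graph as the boolean factor, and absorb the $p_{\bot}^{P}$ term and the ``$+1$'' in $O(n^{2})$ extra work. The paper phrases the second half as a ``dualization'' (swap $P\leftrightarrow 1-P$ and rerun the first derivation) rather than writing it out explicitly as you do, but the content is identical; if anything, your version is more careful about the $+1$, the $\infty$ entries, and the transposition needed to match the $\star_{\min}/\star_{\max}$ convention.
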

\begin{proof}
Write $V_{P}=\{u_{1},\dots,u_{n_{0}}\}$ and $V_{1-P}=\{v_{1},\dots,v_{n_{1}}\}$.
Consider the adjacency matrix $\mathbf{E}$ of $G_{P}$ (an $n_{P}\times n_{P}$
boolean matrix with $\mathbf{E}_{P}[i,j]=1$ iff $(u_{i},u_{j})\in E_{P}$),
and define the $n_{P}\times n_{1-P}$ matrix $\mathbf{P}$ with $\mathbf{P}[i,j]=p(\langle1-P,v_{j},u_{i}\rangle)$.
For a configuration $\langle P,u_{i},v_{j}\rangle\in{\cal V}^{P}$,
where player~$P$ holds the turn, we have $\liftp^{P}(p)(\langle P,u_{i},v_{j}\rangle)=0$
if $\langle P,u_{i},v_{j}\rangle\in{\cal F}^{P}$, and otherwise 
\[
\liftp^{P}(p)(\langle P,u_{i},v_{j}\rangle)=\min_{(u_{i},u_{k})\in E_{P}}p(\langle1-P,v_{j},u_{k}\rangle)=\min\{\mathbf{P}[k,j]\mid\mathbf{E}[i,k]=1\}=(\mathbf{P}\star_{\min}\mathbf{E})[i,j].
\]
To compute $\liftp^{P}(p)$ for those configurations in which the
other player $1-P$ holds the turn, we dualize the problem: we show
equivalently how to compute $\liftp^{1-P}(p)(\langle P,u_{i},v_{j}\rangle)$
for $\langle P,u_{i},v_{j}\rangle\in{\cal V}^{P}$. Similarly as in
the previous case, we obtain $\liftp^{1-P}(p)(\langle P,u_{i},v_{j}\rangle)=\infty$
if $\langle P,u_{i},v_{j}\rangle\in{\cal F}^{1-P}$, and $\liftp^{1-P}(p)(\langle P,u_{i},v_{j}\rangle)=(\mathbf{P}\star_{\max}\mathbf{E})[i,j]$
otherwise.

Hence, a total of two MSBMMs are needed to compute the potential $\liftp^{P}(p)$:
one for the configurations in ${\cal V}^{P}$, and another for the
configurations in ${\cal V}^{1-P}$, after dualization.\end{proof}
\begin{thm}
\noun{\label{thm:verify-to-msbmm}Simulation} of order $n$ and \noun{2TRG
Winning Set} of order $n\times n$ admit $O(n^{2})$-size canonical
certificates that can be checked by verifying two \noun{Max-Semi-Boolean
Matrix Multiplications} of size $n\times n\times n$, and only $O(n^{2})$
extra time.\end{thm}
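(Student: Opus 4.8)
The plan is to read the certificate off the unique fixpoint characterization of Proposition~\ref{prop:determinacy}\ref{enu:potential-fixpoint} and to verify it through the matrix identity of Lemma~\ref{lem:liftp-msbmm}. For a 2TRG $(\mathcal{G},\mathcal{F})$ I would take as certificate the two potentials $r^{0}$ and $r^{1}$, the unique fixpoints of $\liftp^{0}$ and $\liftp^{1}$ (for \noun{Simulation} only $r^{0}$ is needed, since $s\preceq_s t\iff\langle0,s,t\rangle\notin\supp(r^{0})$). By Proposition~\ref{prop:determinacy}\ref{enu:potential-fixpoint} these potentials \emph{are} the whole answer: $\mathcal{W}^{P}=\supp(r^{P})$ and $\mathcal{S}^{1-P}=\mathcal{V}\setminus\supp(r^{P})$, so $(r^{0},r^{1})$ determines all of $\mathcal{W}^{0},\mathcal{W}^{1},\mathcal{S}^{0},\mathcal{S}^{1}$. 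The certificate is \emph{canonical} precisely because the fixpoint is unique, and it has $O(n^{2})$ size: there are $|\mathcal{V}|=O(n^{2})$ entries, and the finite ones are bounded by $|\mathcal{V}|$ (a standard bound for reachability games, also implicit in the linear-time computation of Proposition~\ref{prop:determinacy}\ref{enu:algorithm}), so each entry fits in a machine word.

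Next I would verify the certificate by checking the single fixpoint equation $\liftp^{P}(r^{P})=r^{P}$, which by uniqueness of the fixpoint is both sound and complete. I would do this exactly as in the proof of Lemma~\ref{lem:liftp-msbmm}: build the adjacency matrix $\mathbf{E}_{P}$ of $G_{P}$ and the matrix $\mathbf{R}^{P}$ holding the values of $r^{P}$ on $\mathcal{V}^{1-P}$; then, up to the entries lying in $\mathcal{F}^{0},\mathcal{F}^{1}$, the additive offset $1$, and the sink vertices, the equation restricted to $\mathcal{V}^{P}$ is the product identity $(\mathbf{R}^{P}\star_{\min}\mathbf{E}_{P})[i,j]=r^{P}(\langle P,u_i,v_j\rangle)$, whose right-hand side is already part of the certificate; symmetrically (swapping $P$ with $1-P$, as in the dualization step of Lemma~\ref{lem:liftp-msbmm}) the equation on $\mathcal{V}^{1-P}$ is one $\star_{\max}$-product against $\mathbf{E}_{1-P}$. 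Using $\mathbf{A}\star_{\min}\mathbf{B}=-((-\mathbf{A})\star_{\max}\mathbf{B})$ to turn the $\min$-product into a $\max$-product, the verification of $r^{P}$ reduces to verifying two \noun{Max-Semi-Boolean} products of size at most $n\times n\times n$, plus $O(n^{2})$ bookkeeping (constructing $\mathbf{E}_{P}$ and $\mathbf{R}^{P}$, correcting the $\mathcal{F}$-entries, the $+1$ offsets and the sink cases, and the final entrywise comparison against the certificate). For the full \noun{2TRG-WS} certificate one does this for $P=0$ and $P=1$; the four resulting products pair up by which adjacency matrix they use — $\mathbf{E}_{0}$ for the $\mathcal{V}^{0}$-equations of $r^{0}$ and $r^{1}$, $\mathbf{E}_{1}$ for the $\mathcal{V}^{1}$-equations — so stacking the two value matrices (negating one so that $\min$ aligns with $\max$) collapses them into two \noun{Max-Semi-Boolean} products of size $O(n)\times O(n)\times O(n)$.

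I expect the difficulty to be bookkeeping rather than mathematics. The delicate part is matching the abstract equation $\liftp^{P}(r^{P})=r^{P}$ to the precise $\star_{\min}/\star_{\max}$ formulas — the $+1$ offsets, the $p_{\bot}^{P}$ terms that pin $0$ on $\mathcal{F}^{P}$ and $\infty$ on $\mathcal{F}^{1-P}$, and the vertices with no outgoing edge — and checking that all these patches, together with the pairing needed to keep the count at exactly two products for \noun{2TRG-WS}, cost only $O(n^{2})$ extra time while still amounting to genuine matrix-product \emph{verification} (i.e., the claimed product values are those already recorded in the certificate). The only truly mathematical ingredient — that $\liftp^{P}$ has a unique fixpoint with polynomially bounded finite entries, so that the fixpoint equation is a valid canonical certificate — is already supplied by Proposition~\ref{prop:determinacy}\ref{enu:potential-fixpoint}--\ref{enu:algorithm}.
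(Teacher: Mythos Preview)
Your proposal is correct and follows the same approach as the paper: take the unique fixpoint $r^{P}$ of $\liftp^{P}$ from Proposition~\ref{prop:determinacy}\ref{enu:potential-fixpoint} as the canonical $O(n^{2})$-size certificate, and verify the fixpoint equation $\liftp^{P}(r^{P})=r^{P}$ via the two MSBMMs of Lemma~\ref{lem:liftp-msbmm}. The paper's proof is a two-line invocation of exactly these two facts; you have unpacked the bookkeeping (the $+1$ offsets, the $p_{\bot}^{P}$ corrections, sinks, entrywise comparison) and added the stacking trick to keep the product count at two even when certifying both $r^{0}$ and $r^{1}$ for the full \noun{2TRG-WS}, a point the paper's terse proof leaves implicit.
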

\begin{proof}
Recall that $r^{P}$ is the only solution of the equation $r^{P}=\liftp^{P}(r^{P})$,
and that ${\cal W}^{P}=\supp(r^{P})$. Hence, $r^{P}$ is a $O(n^{2})$-size
certificate, and it can be checked by verifying the fixpoint equation
using two MSBMMs (by Lemma~\ref{lem:liftp-msbmm}).
\end{proof}
The rest of this section is devoted to proving the following theorem.
\begin{thm}
\label{thm:msbmm-to-sim}The verification of \noun{Distinct Max-Semi-Boolean
Matrix Multiplication} of size $n\times m\times m$ can be reduced
to the verification of \noun{2TRG Winning Set} of order $n\times m\log m$.\end{thm}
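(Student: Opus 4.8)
The plan is to produce, from the instance $(\mathbf{A},\mathbf{B},\mathbf{C})$, a single cyclic 2TRG of order $O(n)\times O(m\log m)$ together with a claimed winning/surviving set, so that the claim is correct precisely when $\mathbf{C}=\mathbf{A}\star_{\max}\mathbf{B}$. First I would normalize: since the instance is \noun{Distinct}, replace each $\mathbf{A}[i,k]$ by its rank in row $i$, turning row $i$ into a bijection $\pi_i\colon[m]\to[m]$; writing $S_j=\{k:\mathbf{B}[k,j]=1\}$, the task becomes to verify $\mathbf{C}[i,j]=\max\{\pi_i(k):k\in S_j\}$ for all $i,j$ (an empty $S_j$ forcing $\mathbf{C}[i,j]=-\infty$, handled separately). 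Then I would fix once and for all a Waksman permutation network $\mathcal{N}$ on $m$ terminals — a fixed layered acyclic graph with $O(\log m)$ layers and $O(m\log m)$ switches — and precompute, by the standard routing algorithm, control bits $c^{(i)}$ (in total time $O(nm\log m)$) with which $\mathcal{N}$ routes input wire $k$ to output wire $\pi_i(k)$. The point of using $\mathcal{N}$ is that its topology is independent of $i$: the only $i$-dependent data are the $O(nm\log m)$ control bits, which we can afford to bake into the goal partition.

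Next I would build the 2TRG. The graph $G_0$ has $O(n)$ vertices: one control vertex $\alpha_i$ per row, carrying the index $i$, with a constant number of auxiliary states and a self-loop (the only source of cycles), plus $O(1)$ shared sinks. The graph $G_1$ has $O(m\log m)$ vertices: the wires and switch-traversal gadgets of $\mathcal{N}$; one hub $h_j$ per column with edges into the inputs $S_j$; a shared ``value cascade'' on the output side placing the output carrying value $r$ at distance $\Theta(r)$ from an absorbing Alice-win sink; and — this is where the $\log m$ factor is spent — $O(m\log m)$ probe entry vertices, indexed by a column $j$ and a bit $\ell\in\{1,\dots,\lceil\log m\rceil\}$ and wired, using the claimed bits of $\mathbf{C}[i,j]$, into the recursive butterfly structure of $\mathcal{N}$. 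The goal partition $\mathcal{F}$ — which may depend on both tokens, hence on $(i,\text{switch})$ — encodes both the control bits $c^{(i)}$ (a deviation from $c^{(i)}$ at a switch is punished, so Bob's traversal of $\mathcal{N}$ cannot depart from the $\pi_i$-routing) and the $\mathbf{C}$-derived thresholds read off at the probes. A play from a probe configuration $\sigma_{i,j,\ell}$ (Alice's token at $\alpha_i$, Bob's at the $(j,\ell)$-entry) then amounts to Bob selecting a witness $k\in S_j$ and the two players jointly tracing its $\pi_i$-route through $\mathcal{N}$, the outcome certifying whether some $k\in S_j$ is routed to an output whose top bits are the claimed top bits of $\mathbf{C}[i,j]$ and whose $\ell$-th bit is $1$; a progress-measure analysis in the style of Proposition~\ref{prop:determinacy} then shows Bob has a surviving strategy from $\sigma_{i,j,\ell}$ iff this holds.

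The reduction then outputs this 2TRG together with the winning/surviving sets predicted from $\mathbf{C}$ (probe configurations receiving their status from the claimed bits of $\mathbf{C}$, all other configurations from the deterministic game structure), in time near-linear in the produced instance. For correctness: if $\mathbf{C}$ is correct, then every claimed bit is the true bit, every probe is well-formed, and the predicted sets are the true winning/surviving sets — the key point being that the punishments force Bob's traversal to be $\pi_i$-faithful, so his token exits only at $\pi_i(k)$ for the $k\in S_j$ he chose, and his best play realizes exactly $\max_{k\in S_j}\pi_i(k)=\mathbf{C}[i,j]$. Conversely, if $\mathbf{C}$ is wrong at some $(i,j)$, take the topmost bit $\ell^{*}$ at which the claim and the truth disagree; the probe $\sigma_{i,j,\ell^{*}}$ is built from the (correct) higher bits, so its true surviving status is the true $\ell^{*}$-th bit of $\mathbf{C}[i,j]$, which differs from what the predicted set asserts — hence the predicted set is not the true one. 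This reduces the verification of \noun{Distinct} MSBMM of size $n\times m\times m$ to the verification of \noun{2TRG-WS} of order $n\times m\log m$, as claimed.

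I expect the main obstacle to be the size budget on Bob's side. A naive construction would carry an entire threshold in $\{1,\dots,m\}$ through the game, costing $\Theta(m^2\log m)$ configurations; both halves of the cure are delicate — using a permutation network of essentially optimal size $O(m\log m)$, so one shared copy serves all $n$ rows with only the control bits varying, and extracting $\mathbf{C}[i,j]$ one output-address bit at a time via the butterfly recursion, so that only $O(\log m)$ probes per column suffice. A second delicate point is getting the two-player structure exactly right: the switch-traversal gadgets and the goal partition must be arranged so that, even though Bob's routing is pinned to $\pi_i$, the interleaving of the two players' moves genuinely realizes the $\exists k(\cdots)\wedge\forall k'(\cdots)$ alternation underlying $\mathbf{C}[i,j]=\max_{k\in S_j}\pi_i(k)$ — so that the instance is not secretly a cheap one-player reachability — and, alongside this, the routine but lengthy bookkeeping of turn parities and of the additive constants in the progress measure has to be carried through.
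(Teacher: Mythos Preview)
Your proposal has a real gap in the size accounting, and it sits exactly where you yourself flag the difficulty. Once Bob's token leaves the probe vertex $(j,\ell)$ (or the hub $h_j$) and enters the single shared copy of~$\mathcal{N}$, his position is just a wire or switch of~$\mathcal{N}$; the indices $j$ and $\ell$ are no longer present anywhere in the configuration, since Alice's token carries only $i$ (plus $O(1)$ auxiliary bits). But the predicate you want the game to decide at the output side --- ``the bits of $\pi_i(k)$ above position $\ell$ agree with those of $\mathbf{C}[i,j]$, and the $\ell$-th bit is $1$'' --- depends essentially on both $j$ and $\ell$. The goal partition sees only $(i,\text{Bob's position})$ and therefore cannot enforce a $(j,\ell)$-dependent test once Bob is inside~$\mathcal{N}$. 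The same objection applies to your ``value cascade'': with Alice confined to a self-loop there is no mechanism that counts Bob's remaining steps against a threshold that varies with~$j$. Any fix that keeps a copy of the network per probe, or that lets Alice's token carry $j$, pushes the order to at least $n\times\Theta(m^{2})$ or $\Theta(nm)\times m\log m$, overshooting the target.

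The paper sidesteps this by \emph{not} trying to read off $\mathbf{C}[i,j]$ bit by bit. It first checks $\mathbf{C}\leq\mathbf{A}\star_{\max}\mathbf{B}$ in $O(nm)$ time (for each $(i,j)$, find the unique $k_{ij}$ with $\mathbf{A}[i,k_{ij}]=\mathbf{C}[i,j]$ and test $\mathbf{B}[k_{ij},j]=1$), so that what remains is the purely existential question: is there an \emph{invalid triangle} $(i,j,k)$ with $\mathbf{A}[i,k]>\mathbf{C}[i,j]$ and $\mathbf{B}[k,j]=1$? The claimed winning set is then simply ``$\mathcal{W}^{0}=\mathcal{V}$'', and the game is built so that Bob survives somewhere iff such a triangle exists. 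The $i$-dependent permutation $\pi_i$ sorts the $2m$ columns of $\mathbf{A}$ \emph{and} $\mathbf{C}$ together (not just $\mathbf{A}$), so that the comparison $\mathbf{A}[i,k]>\mathbf{C}[i,j]$ becomes the order relation $\pi_i(k)>\pi_i(m+j)$. Bob's surviving cycle is
\[
x_k \;\xrightarrow{\text{network }Y\text{ for }\pi_i}\; y_{\pi_i(k)} \;\xrightarrow{\text{fixed ``downward'' edge}}\; z_{\pi_i(m+j)} \;\xrightarrow{\text{network }Z\text{ for }\pi_i^{-1}}\; w_{m+j} \;\xrightarrow{\mathbf{B}[k,j]=1}\; x_k,
\]
using \emph{two} Waksman networks (one for $\pi_i$, one for $\pi_i^{-1}$) with the control bits enforced through~$\mathcal{F}$ exactly as you propose. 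The point is that at every moment Bob's token itself encodes one of $k$, $\pi_i(k)$, $\pi_i(m+j)$, $m+j$ --- so the column $j$ is carried \emph{inside} the token's position, and no external $(j,\ell)$-memory is required. That is precisely the device your bit-extraction scheme is missing.
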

\begin{cor}
If \noun{Simulation} of order $n$ can be computed or verified in
$O(n^{\alpha})$ time for $\alpha\geq2$, then \noun{Distinct MSBMM}
of size $n\times n\times n$ can be verified in $O(n^{\alpha}\log n)$
time.
\end{cor}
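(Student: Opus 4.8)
The plan is to build, from a \textsc{Distinct MSBMM} instance $(\mathbf{A},\mathbf{B},\mathbf{C})$ of size $n\times m\times m$, a single 2TRG of order $O(n)\times O(m\log m)$ whose player-$0$ winning set exposes, for each pair $(i,j)$, whether $\mathbf{C}[i,j]=\max\{\mathbf{A}[i,k]:\mathbf{B}[k,j]=1\}$: there will be a designated configuration $\sigma^{(i,j)}$ with $\sigma^{(i,j)}\in{\cal W}^{0}$ (equivalently $\sigma^{(i,j)}\notin{\cal S}^{1}$) exactly when that identity holds. Then $\mathbf{C}$ is the correct product iff all the $\sigma^{(i,j)}$ lie in ${\cal W}^{0}$, a condition read straight off the \textsc{2TRG-WS} output; to get a reduction to \emph{verification}, one also hands over the candidate answer (the progress-measure certificate of Theorem~\ref{thm:verify-to-msbmm}) that the game would have were $\mathbf{C}$ the true product --- it has an explicit form following the honest semantics, is cheap to write down, and is the correct certificate iff $\mathbf{C}$ is the correct product (since a wrong $\mathbf{C}[i,j]$ flips the status of $\sigma^{(i,j)}$).

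First I would normalize the instance: pad $m$ to the next power of two (appending all-zero rows/columns to $\mathbf{B}$ and indifferent columns to $\mathbf{A}$), and exploit distinctness by replacing each row of $\mathbf{A}$ with the permutation $\pi_i$ of $[m]$ given by the ranks of its entries (translating $\mathbf{C}$ accordingly, rejecting out of hand any $\mathbf{C}[i,j]$ not occurring as a value in row $i$, and handling empty columns $S_j:=\{k:\mathbf{B}[k,j]=1\}$ with a reserved bottom value). After this step, verifying $\mathbf{C}$ amounts to checking, for each $(i,j)$, that $\mathbf{C}[i,j]$ is the largest $v$ with $\sigma_i(v)\in S_j$, where $\sigma_i:=\pi_i^{-1}$; equivalently, $\sigma_i(\mathbf{C}[i,j])\in S_j$ while $\sigma_i(v)\notin S_j$ for every $v>\mathbf{C}[i,j]$.

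Next I would set up the game so that player~$0$ is essentially passive --- her token merely records the row index $i$ while the matrices are stored in the partition ${\cal F}$ and in the edge set of $G_1$ (membership $k\in S_j$ becoming the presence of an edge) --- and player~$1$ drives the verification as the \emph{refuter} of $\mathbf{C}[i,j]$. Player~$1$'s graph, of size $O(m\log m)$, is organised around one Waksman permutation network $W$ on $m$ wires (entry/exit wires, $\Theta(m\log m)$ switch-gadgets, plus $O(m\log m)$ auxiliary vertices); a single pass across $W$ costs $\Theta(\log m)$ moves, which is exactly the source of the $\log m$ factor. From $\sigma^{(i,j)}$, player~$1$ first attempts an upper-bound refutation --- commit to some $v>\mathbf{C}[i,j]$ and route the token across $W$, which (guided by ${\cal F}$) realises the row-$i$ permutation $\sigma_i$, to the wire $\sigma_i(v)$, then test $\sigma_i(v)\in S_j$; success means $\mathbf{C}[i,j]$ was too small and player~$1$ wins. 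If this attempt is declined or fails, the game forces the attainment check --- route the wire $\mathbf{C}[i,j]$ across $W$ to $\sigma_i(\mathbf{C}[i,j])$ and test membership in $S_j$; here distinctness is essential, as it makes $\sigma_i(\mathbf{C}[i,j])$ the \emph{unique} position worth checking, so player~$0$ never has to move. Player~$0$ wins iff the refutation fails and the attainment check succeeds, i.e.\ iff $\mathbf{C}[i,j]=\max\{\mathbf{A}[i,k]:\mathbf{B}[k,j]=1\}$; ${\cal F}$ is arranged so that any dishonest play --- routing incoherently through $W$, or asserting a false membership --- is losing for the offender.

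The heart of the construction --- and by far the most delicate point --- is the design of the $W$-based gadget: a single static scaffold with only $O(m\log m)$ vertices must, for whatever $(i,j)$ the current configuration pins down, simultaneously confine the choice of index to $S_j$, realise the \emph{arbitrary} row-dependent permutation $\sigma_i$, and carry the bit-by-bit comparison against $\mathbf{C}[i,j]$, all while the available configuration space is too small to track everything naively and while neither player may cheat. This is exactly what a \emph{rearrangeable} network buys: since $\sigma_i$ ranges over the whole symmetric group, no shallow butterfly (let alone a binary tree) can represent it in a form a single token can ``walk'', whereas Waksman's network --- with its $m\log m-m+1$ switches and its recursive Bene\v{s}-type structure --- realises every permutation, so the correct switch states for $\sigma_i$ can be stored compactly in ${\cal F}$ per row and any deviation leaves a detectable trace. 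Once the gadget is pinned down, the remaining work --- that honest play certifies exactly the maximum, that dishonest play cannot help, and that the sizes $|V_0|=O(n)$, $|V_1|=O(m\log m)$ and polynomial construction time all hold --- is a (lengthy) case analysis over switch types and the two branches of the refutation.
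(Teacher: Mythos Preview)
Your architectural choices match the paper's---player~$0$ passive with her token holding only the row index $i$, a Waksman network inside $G_1$, per-row switch settings stored in ${\cal F}$, and $\mathbf B$ encoded as edges---but the single-pass ``refute, then attain'' protocol you sketch has a memory leak you name (``the available configuration space is too small to track everything naively'') without actually resolving. Both of your tests need $j$ at the \emph{end}: the refutation concludes with ``$\sigma_i(v)\in S_j$'', the attainment check with ``$\sigma_i(\mathbf C[i,j])\in S_j$''. But once player~$1$'s token leaves the vertex encoding $j$ and enters $W$ at some wire, the configuration carries only $(i,\text{wire})$; after the network it carries $(i,\sigma_i(v))$, and the membership test against $S_j$ can no longer be posed. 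Enforcing $v>\mathbf C[i,j]$ at the start has the mirror problem (that comparison also needs $j$); holding $(j,v)$ in a vertex costs $\Theta(m^2)$ nodes, and pushing $j$ into player~$0$'s token costs $\Theta(nm)$---both blow the budget. A telling symptom: as you describe it, $G_1$ is a DAG (one forward pass through $W$, then a terminal test), so the 2TRG is \emph{semi-acyclic} and falls under Theorem~\ref{thm:verify-to-bmm}; a correct reduction of MSBMM verification to that class would collapse exactly the separation this section is trying to establish.

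The paper escapes the memory problem by a genuinely different mechanism. The attainment check is handled \emph{outside} the game in $O(nm)$ time, so the game need only detect an \emph{invalid triangle} $(i,j,k)$ with $\mathbf A[i,k]>\mathbf C[i,j]$ and $\mathbf B[k,j]=1$. The permutation is taken on $2m$ wires, jointly sorting the $A$-columns and the $C$-columns so that the inequality becomes purely positional, $\pi_i(k)>\pi_i(m{+}j)$. Then---and this is the idea your sketch is missing---\emph{two} permutation gadgets are chained ($\pi_i$ followed by $\pi_i^{-1}$) and closed into a loop by return edges $w_{m+j}\to x_k$ present iff $\mathbf B[k,j]=1$. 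Bob survives iff he can cycle forever, and the cycle $x_k\to y_{\pi_i(k)}\to z_{\pi_i(m+j)}\to w_{m+j}\to x_k$ closes exactly when an invalid triangle exists. The cyclic structure is what lets the game impose the two $j$-dependent constraints at different points of the walk without ever storing $j$ and $k$ simultaneously in a single configuration.
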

An $m\times m$ boolean matrix $\mathbf{B}$ and two $n\times m$
matrices of numbers $\mathbf{A}$ and $\mathbf{C}$ are given, where
$\mathbf{A}[i,k]\neq\mathbf{A}[i,k']$ for $k\neq k'$. We want to
check whether $\mathbf{C}[i,j]=(\mathbf{A}\star_{\max}\mathbf{B})[i,j]$
for every $i$ and $j$. Fixed $i\in\{1,\dots,n\}$ and $j\in\{1,\dots,m\}$,
let $k_{ij}$ be the only index such that $\mathbf{A}[i,k_{ij}]=\mathbf{C}[i,j]$.
If there is no such $k_{ij}$, or $\mathbf{B}[k_{ij},j]=0$, then
clearly the answer is no. Otherwise, $\mathbf{C}[i,j]\leq(\mathbf{A}\star_{\max}\mathbf{B})[i,j]$
for every $i,j$. It remains to check that there is no triple $(i,j,k)$
such that $\mathbf{A}[i,k]>\mathbf{C}[i,j]$ with $\mathbf{B}[k,j]=1$.
We call such a triple an \emph{invalid triangle}. We construct in
$O(nm\log m)$ time a 2TRG $({\cal G}(G_{0},G_{1}),{\cal F})$, where
Bob survives on some initial configurations iff there exists an invalid
triangle, and we conclude by checking that Alice wins from every configuration.

The graph $G_{0}$ contains $n$ isolated loops, i.e., $V_{0}=\{1,\dots,n\}$
and $E_{0}=\{(i,i)\mid i\in V_{0}\}$, so that the token of Alice
always remains in its initial position $i\in V_{0}$. The graph $G_{1}$
is built in such a way that, if (and only if) there is an invalid
triangle $(i,j,k)$, then Bob can move his token in a cycle, without
encountering losing configurations, and survive. Since the graph $G_{1}$
cannot depend on $i$, to achieve this goal we can only manipulate
the winning final configurations. Moreover, we need to keep the number
of vertices low to $O(m\log m)$. To this end, it comes to help the
construction of permutation networks.

A \emph{permutation network} \cite{Waksman1968} of size $n$ is
defined as follows. There are $n$ \emph{inlets} $u_{1},\dots,u_{n}$
and $n$ \emph{outlets} $v_{1},\dots,v_{n}$. Between the inlets and
the outlets, there is a set of \emph{gates} $S$, and for each gate
$s\in S$ there are two input ports $x_{1}^{s},x_{2}^{s}$ and two
output ports $y_{1}^{s},y_{2}^{s}$. A gate connects each of the two
input ports to an output port: when the gate gate is active, they
are crossed and swapped, otherwise they are connected in order. Inlets,
gate ports and outlets are connected with \emph{wires}, which form
a bijective relation $W\subseteq O\times I$ between $O=\{u_{1},\dots,u_{n}\}\cup\bigcup_{s\in S}\{y_{1}^{s},y_{2}^{s}\}$
and $I=\bigcup_{s\in S}\{x_{1}^{s},x_{2}^{s}\}\cup\{v_{1},\dots,v_{n}\}$.
The property of the network is as follows: for every permutation $\pi\colon\{1,\dots n\}\to\{1,\dots n\}$,
there is a subset of the gates $A_{\pi}\subseteq S$ which have to
be activated, so that the network realizes the permutation $\pi$
between the inlets and the outlets.\footnote{Given $\pi$, define the directed graph $G_{\pi}=(I\cup O,W\cup T_{\pi})$,
where $T_{\pi}\subseteq I\times O$ contains all the pairs of the
form $(x_{i}^{s},y_{j}^{s})$, for $s\in S$ and $i,j\in\{1,2\}$,
with $i\neq j$ if $s\in A_{\pi}$ and $i=j$ otherwise. For every
permutation $\pi$, the graph $G_{\pi}$ is the union of $n$ vertex-disjoint
paths $P_{1},\dots,P_{n}$, where $P_{i}$ goes from $u_{i}$ to $v_{\pi(i)}$.} Waksman \cite{Waksman1968} shows a construction of permutation networks
of size $n=2^{k}$ where $|S|=O(n\log n)$ and $A_{\pi}$ is computable
in $O(n\log n)$ time for every $\pi$.

\begin{figure}
\begin{centering}
\includegraphics{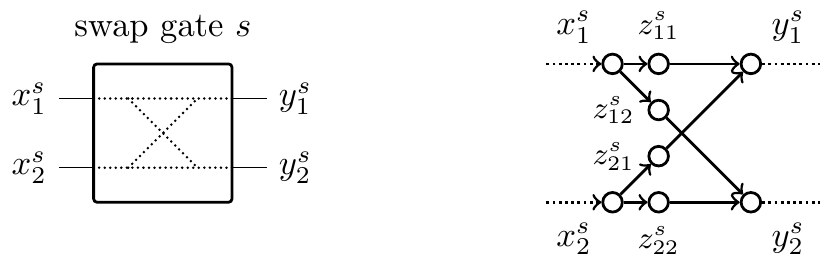}
\par\end{centering}

\centering{}\caption{\label{fig:gate}Visual representation of a swap gate and its corresponding
gate gadget graph.}
\end{figure}

To realize a permutation network in the graph $G_{1}$, we have to
show how to implement a gate. We define the \emph{gate gadget} graph
(Fig.~\ref{fig:gate}) as follows: there are two input vertices $x_{1}^{s}$
and $x_{2}^{s}$ and two output vertices $y_{1}^{s}$ and $y_{2}^{s}$,
corresponding to the ports of the gate $s\in S$, then there are four
guard vertices $z_{jk}^{s}$ and eight edges $x_{j}^{s}\to z_{jk}^{s}$
and $z_{jk}^{s}\to y_{k}^{s}$, for $j,k\in\{1,2\}$. By removing
the guard vertices $z_{12}$ and $z_{21}$, this gadget behaves like
an inactive gate, while by removing $z_{11}$ and $z_{22}$ it behaves
like an active gate. The \emph{permutation gadget} graph $X$ of size
$n$ (Fig.~\ref{fig:perm-net}) contains the inlets $u_{1},\dots,u_{n}$
and the outlets $v_{1},\dots,v_{n}$ as vertices, a gate gadget for
each gate $s\in S$, and all the wires $W$ as extra edges. For a
given permutation $\pi$, define $K_{s}^{X}(\pi)=\{z_{11}^{s},z_{22}^{s}\}$
for $s\in A_{\pi}$, $K_{s}^{X}(\pi)=\{z_{12}^{s},z_{21}^{s}\}$ for
$s\in S\setminus A_{\pi}$ and $K^{X}(\pi)=\bigcup_{s\in S}K_{s}^{X}(\pi)$.
By removing all the vertices in $K^{X}(\pi)$, the gadget realizes
the permutation $\pi$: the only maximal paths in the graph not passing
through $K^{X}(\pi)$ are $P_{1},\dots,P_{n}$, where $P_{i}$ goes
from $u_{i}$ to $v_{\pi(i)}$. This completes the construction of
the gadget.

\begin{figure}
\begin{centering}
\includegraphics{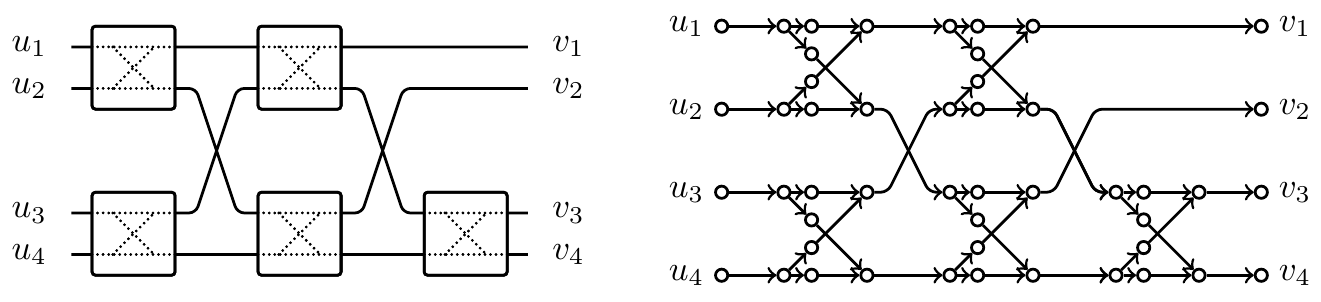}
\par\end{centering}

\centering{}\caption{\label{fig:perm-net}A permutation network of size $n=4$ and its
corresponding permutation gadget graph.}
\end{figure}

We now define a family of permutations $\pi_{i}$ indexed by $i\in\{1,\dots,n\}$.
Identify the $k$-th columns of $\mathbf{A}$ with the index $\ell=k$
and the $j$-th column of $\mathbf{C}$ with the index $\ell=m+j$.
For every $i$, let $\pi_{i}\colon\{1,\dots,2m\}\to\{1,\dots,2m\}$
be a permutation that sorts the indexes $\ell\in\{1,\dots,2m\}$ according
to the value $\mathbf{A}[i,k]$ for $\ell=k\in\{1,\dots,m\}$ and
$\mathbf{C}[i,j]$ for $\ell=m+j\in\{m+1,\dots,2m\}$, breaking ties
in favor of $\mathbf{A}$. Namely, $\pi_{i}$ is such that $\mathbf{A}[i,k]>\mathbf{C}[i,j]$
implies $\pi_{i}(k)>\pi_{i}(m+j)$ and $\mathbf{A}[i,k]\leq\mathbf{C}[i,j]$
implies $\pi_{i}(k)<\pi_{i}(m+j)$.

The graph $G_{1}$ is constructed as follows (see Fig.~\ref{fig:graph-bob}).
We start with four distinct vertices $x_{\ell},y_{\ell},z_{\ell},w_{\ell}$
for each $\ell\in\{1,\dots,2m\}$, forming four successive layers.
Between the first two layers, we add a permutation gadget $Y$ of
size $2m$, with inlets $x_{1},\dots,x_{2m}$ and outlets $y_{1},\dots,y_{2m}$,
associated with the permutation family $\pi_{i}^{Y}=\pi_{i}$. Then,
we add an edge $(y_{\ell},z_{\ell'})$ for each $\ell,\ell'\in\{1,\dots,2m\}$
with $\ell'\leq\ell$. Between the third and the fourth layers, we
add a permutation gadget $Z$ of size $2m$ with inlets $z_{1},\dots,z_{2m}$
and outlets $w_{1},\dots,w_{2m}$, associated with the family of inverse
permutations $\pi_{i}^{Z}=\pi_{i}^{-1}$. Finally, for each $j,k\in\{1,\dots,m\}$
such that $\mathbf{B}[k,j]=1$, we add a ``return'' edge $(w_{m+j},x_{k})$
connecting the last with the first layer. The set of final winning
configurations for Alice is ${\cal F}^{0}={\cal V}^{1}\cup\{\langle0,i,u\rangle\mid i\in\{0,\dots,n\}\text{ and }u\in K^{Y}(\pi_{i})\cup K^{Z}(\pi_{i}^{-1})\}$.

Our reduction is complete. We conclude the proof of Theorem~\ref{thm:msbmm-to-sim}
with Lemma~\ref{lem:triangle-implies-survive} and Lemma~\ref{lem:no-triangle-implies-win}.

\begin{figure}
\begin{centering}
\includegraphics{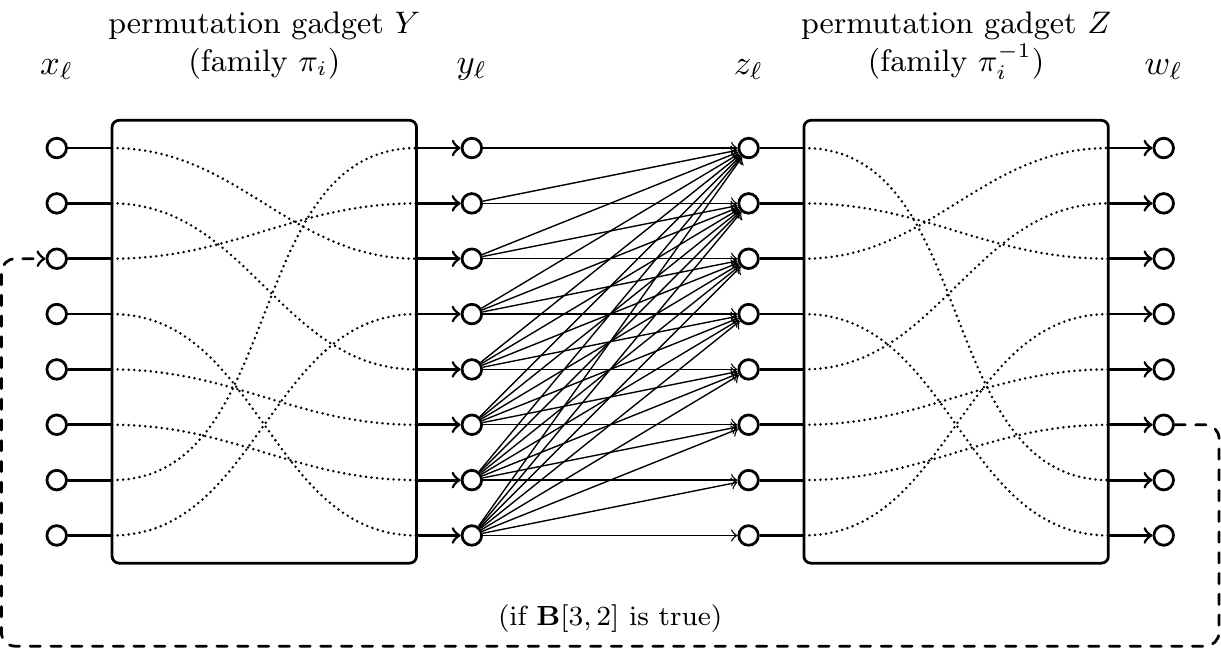}
\par\end{centering}

\centering{}\caption{\label{fig:graph-bob}A depiction of the graph $G_{1}$ for $m=4$.
The dotted lines within each permutation gadget $X\in\{Y,Z\}$ represent
maximal paths not passing through $K^{X}(\pi_{i}^{X})$, and depend
on the vertex $i$ in the graph $G_{0}$ of Alice where her token
is located. The dashed line is an example of edge $(w_{m+j},x_{k})$,
for $j=2$ and $k=3$, which is present only if $\mathbf{B}[3,2]=1$.
The actual graph has a similar edge for every $k$ and $j$ such that
$\mathbf{B}[k,j]=1$.}
\end{figure}

\begin{lem}
\label{lem:triangle-implies-survive}If $\mathbf{A},\mathbf{B},\mathbf{C}$
comprise an invalid triangle $(i,j,k)$, then $\langle0,i,x_{k}\rangle\in{\cal S}^{1}$.\end{lem}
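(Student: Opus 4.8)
The plan is to exhibit a directed closed walk through $x_{k}$ in $G_{1}$ which, together with Alice's forced self-loop on $i$, traps the play forever without ever reaching a configuration where Alice may stop and win; formally I will package this walk as a set closed for Bob and invoke Proposition~\ref{prop:determinacy}~\ref{enu:stable}. The two hypotheses translate into $G_{1}$ as follows: since $\mathbf{A}[i,k]>\mathbf{C}[i,j]$, the sorting permutation $\pi_{i}$ satisfies $\pi_{i}(k)>\pi_{i}(m+j)$ --- write $a=\pi_{i}(k)$, $b=\pi_{i}(m+j)$, so $b<a$ --- and since $\mathbf{B}[k,j]=1$, the return edge $(w_{m+j},x_{k})$ lies in $G_{1}$.

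Then I would trace the walk. Deleting $K^{Y}(\pi_{i})$ from the gadget $Y$ leaves precisely the $2m$ vertex-disjoint paths realizing $\pi_{i}$; let $P_{Y}$ be the one from the inlet $x_{k}$ to the outlet $y_{a}$. Being a path of $Y\setminus K^{Y}(\pi_{i})$ it avoids $K^{Y}(\pi_{i})$, and since $Y$ and $Z$ have disjoint vertex sets it also avoids $K^{Z}(\pi_{i}^{-1})$. Because $b\le a$, the edge $(y_{a},z_{b})$ is present. Deleting $K^{Z}(\pi_{i}^{-1})$ from $Z$ leaves the paths realizing $\pi_{i}^{-1}$; let $P_{Z}$ be the one from the inlet $z_{b}$ to the outlet $w_{\pi_{i}^{-1}(b)}=w_{m+j}$, which likewise avoids $K^{Y}(\pi_{i})\cup K^{Z}(\pi_{i}^{-1})$. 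Concatenating $P_{Y}$, the edge $(y_{a},z_{b})$, $P_{Z}$, and the return edge $(w_{m+j},x_{k})$ yields a closed walk $C$ from $x_{k}$ to $x_{k}$, every vertex $u$ of which lies outside $K^{Y}(\pi_{i})\cup K^{Z}(\pi_{i}^{-1})$ and therefore satisfies $\langle0,i,u\rangle\notin{\cal F}^{0}$, i.e.\ $\langle0,i,u\rangle\in{\cal F}^{1}$.

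Finally I would let ${\cal U}=\{\langle0,i,u\rangle\mid u\text{ on }C\}\cup\{\langle1,u,i\rangle\mid u\text{ on }C\}$, note $\langle0,i,x_{k}\rangle\in{\cal U}$, and check that ${\cal U}$ is closed for Bob. At a Bob-configuration $\langle1,u,i\rangle\in{\cal U}\cap{\cal V}^{1}$ it suffices to produce one successor in ${\cal U}$: Bob moves his token to the successor $u'$ of $u$ along $C$ (the unique out-edge inside $P_{Y}$ or $P_{Z}$, and the edge $(y_{a},z_{b})$ or $(w_{m+j},x_{k})$ at the two branch vertices), landing in $\langle0,i,u'\rangle\in{\cal U}$. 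At an Alice-configuration $\langle0,i,u\rangle\in{\cal U}\cap{\cal V}^{0}$ the only move is the self-loop $i\to i$, leading to $\langle1,u,i\rangle\in{\cal U}$, and $\langle0,i,u\rangle\in{\cal F}^{1}$ by the previous paragraph, so the $\lifts^{1}$-condition for ${\cal V}^{0}$-configurations is met. Hence ${\cal U}\subseteq\lifts^{1}({\cal U})$, and Proposition~\ref{prop:determinacy}~\ref{enu:stable} gives $\langle0,i,x_{k}\rangle\in{\cal S}^{1}$.

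The step requiring the most care is the bookkeeping of the permutation-gadget semantics: using $\pi_{i}$ for $Y$ and $\pi_{i}^{-1}$ for $Z$ so that entering $Y$ at $x_{k}$ one exits at $y_{a}$, and then entering $Z$ at $z_{b}$ one exits at exactly $w_{m+j}$ --- the unique vertex from which the edge supplied by $\mathbf{B}[k,j]=1$ returns to $x_{k}$ --- while at the same time confirming that the internal portions of these paths never touch the deleted guard vertices, so that Alice is never offered a winning stop along the loop.
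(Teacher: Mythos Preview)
Your proof is correct and follows essentially the same approach as the paper: both trace the cycle $x_{k}\rightsquigarrow y_{\pi_{i}(k)}\to z_{\pi_{i}(m+j)}\rightsquigarrow w_{m+j}\to x_{k}$ through the two permutation gadgets and the return edge, observing that it avoids $K^{Y}(\pi_{i})\cup K^{Z}(\pi_{i}^{-1})$ so Alice can never stop and win. The only difference is presentational: the paper describes Bob's surviving strategy directly, whereas you package the cycle as a set ${\cal U}$ closed for player~$1$ and invoke Proposition~\ref{prop:determinacy}~\ref{enu:stable}, which is a cleaner and more rigorous formulation of the same idea.
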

\begin{proof}
First of all, Bob has to move his token following the permutation
networks $Y$ and $Z$. Namely, for $X\in\{Y,Z\}$, if his token is
on a non-last vertex $u$ in one of the maximal paths of $X$ not
passing through $K^{X}(\pi_{i}^{X})$, then Bob has to move his token
to the next vertex $u'$ in the path. This rule is necessary and sufficient
to ensure that the Alice cannot stop the play on ${\cal F}^{0}$ and
win.

To close a cycle, Bob moves first his token along the permutation
gadget $Y$, from $x_{k}$ to $y_{\pi_{i}(k)}$. Then, he moves it
from $y_{\pi_{i}(k)}$ to $z_{\pi_{i}(m+j)}$, which is possible since
$\mathbf{A}[i,k]>\mathbf{C}[i,j]$, so $\pi_{i}(k)>\pi_{i}(m+j)$
and the edge $(y_{\pi_{i}(k)},z_{\pi_{i}(m+j)})$ is given by construction.
Next, he follows the permutation gadget $Z$, moving the token from
$z_{\pi_{i}(m+j)}$ to $w_{\pi_{i}^{-1}(\pi_{i}(m+j))}=w_{m+j}$,
and, finally, he moves from $w_{m+k}$ back to $x_{j}$, which is
possible since $\mathbf{B}[k,j]=1$.\end{proof}
\begin{lem}
\label{lem:no-triangle-implies-win}If $\mathbf{A},\mathbf{B},\mathbf{C}$
do not comprise an invalid triangle, then ${\cal W}^{0}={\cal V}$.\end{lem}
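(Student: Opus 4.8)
Here is how I would prove Lemma~\ref{lem:no-triangle-implies-win}.

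The plan is to reduce the statement to a purely graph-theoretic acyclicity property of $G_{1}$, and then to read off an invalid triangle from any cycle that would violate it. First I would observe that, since $G_{0}$ consists only of the self-loops $(i,i)$, Alice's token never leaves its starting vertex $i\in V_{0}$, so the whole game amounts to Bob walking his token in $G_{1}$. Fixing $i$, write $K_{i}=K^{Y}(\pi_{i})\cup K^{Z}(\pi_{i}^{-1})$ and $H_{i}=G_{1}[V_{1}\setminus K_{i}]$, and consider the Alice strategy that stops at exactly the configurations $\langle0,i,u\rangle$ with $u\in K_{i}$ (these, together with all of ${\cal V}^{1}$, are precisely ${\cal F}^{0}$) and otherwise takes the self-loop. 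Since every configuration $\langle1,u,i\rangle$ lies in ${\cal V}^{1}\subseteq{\cal F}^{0}$, Bob can never win and must move whenever he can; hence a play conforming to this strategy either is finite — Bob gets stuck, or a configuration $\langle0,i,u\rangle$ with $u\in K_{i}$ is reached, and in both cases the last configuration lies in ${\cal F}^{0}$, so Alice wins — or it traces an infinite walk $u_{0},u_{1},u_{2},\dots$ of Bob's token with $u_{1},u_{2},\dots\notin K_{i}$, which is possible iff $H_{i}$ contains a (reachable) cycle. So it suffices to show that, when $\mathbf{A},\mathbf{B},\mathbf{C}$ admit no invalid triangle, every $H_{i}$ is acyclic; then the displayed strategy wins from every configuration and ${\cal W}^{0}={\cal V}$.

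Next I would pin down the shape of cycles in $H_{i}$. Deleting $K^{Y}(\pi_{i})$ collapses the gadget $Y$ into the $2m$ vertex-disjoint paths $P_{\ell}^{Y}$ from $x_{\ell}$ to $y_{\pi_{i}(\ell)}$, and deleting $K^{Z}(\pi_{i}^{-1})$ collapses $Z$ into the disjoint paths $P_{\ell}^{Z}$ from $z_{\ell}$ to $w_{\pi_{i}^{-1}(\ell)}$; the only remaining edges are the crossing edges $y_{\ell}\to z_{\ell'}$ with $\ell'\le\ell$ and the return edges $w_{m+j}\to x_{k}$ with $\mathbf{B}[k,j]=1$. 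These run, in order, from the $x$-layer to the $y$-layer, then to $z$, then to $w$, then back to $x$ (and the $x_{\ell}$ with $\ell>m$ have in-degree $0$ in $H_{i}$, hence lie on no cycle), so every cycle of $H_{i}$ winds $t\ge1$ times around this four-layer loop, its $s$-th lap (indices modulo $t$) necessarily of the form
\[
x_{k_{s}}\rightsquigarrow y_{\pi_{i}(k_{s})}\to z_{c_{s}}\rightsquigarrow w_{\pi_{i}^{-1}(c_{s})}\to x_{k_{s+1}},
\]
with $k_{s}\le m$; the crossing edge forces $c_{s}\le\pi_{i}(k_{s})$, and the return edge forces $\pi_{i}^{-1}(c_{s})=m+j_{s}$ for some $j_{s}\in\{1,\dots,m\}$ with $\mathbf{B}[k_{s+1},j_{s}]=1$, whence $c_{s}=\pi_{i}(m+j_{s})$.

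Finally I would extract the contradiction. From $\pi_{i}(m+j_{s})=c_{s}\le\pi_{i}(k_{s})$, injectivity of $\pi_{i}$, and $m+j_{s}\ne k_{s}$ we get $\pi_{i}(m+j_{s})<\pi_{i}(k_{s})$, equivalently $\mathbf{A}[i,k_{s}]>\mathbf{C}[i,j_{s}]$ by the defining property of $\pi_{i}$ (which sorts the $2m$ indices by value, ties broken in favour of $\mathbf{A}$). If $\mathbf{A}[i,k_{s+1}]>\mathbf{C}[i,j_{s}]$ held for some $s$, then $(i,j_{s},k_{s+1})$ would be an invalid triangle (as $\mathbf{B}[k_{s+1},j_{s}]=1$), against the hypothesis; so $\mathbf{A}[i,k_{s+1}]\le\mathbf{C}[i,j_{s}]$, i.e.\ $\pi_{i}(k_{s+1})<\pi_{i}(m+j_{s})=c_{s}$, for every $s$. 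Combined with $c_{s}<\pi_{i}(k_{s})$ this gives $\pi_{i}(k_{s+1})<\pi_{i}(k_{s})$ for all $s$ modulo $t$ — an impossible cyclic descent. Hence no $H_{i}$ has a cycle, and by the first step ${\cal W}^{0}={\cal V}$. The main obstacle is the second step: making rigorous that, once the guard vertices $K_{i}$ are deleted, the permutation gadgets really leave behind nothing but the disjoint inlet-to-outlet paths (no shortcut survives inside a gadget), and that the global layering of $G_{1}$ admits no way of closing a cycle other than the lap decomposition above; once that is settled, the cyclic-monotonicity argument closing the proof is routine.
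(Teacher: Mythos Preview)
Your argument is correct and coincides with the paper's proof: both reduce to showing that Bob has no infinite walk in $G_{1}$ avoiding $K_{i}=K^{Y}(\pi_{i})\cup K^{Z}(\pi_{i}^{-1})$, and both rule this out via the ordering $\pi_{i}$. The paper packages this as a potential $p(u)=\pi_{i}(\ell)$ for $u$ on the $\ell$-th inlet-to-outlet path of either gadget, showing $p$ is non-increasing on every edge of $H_{i}$ and strictly decreasing on each return edge $(w_{m+j},x_{k})$; your lap decomposition tracks the same quantity at the $x$- and $z$-layers (your $\pi_{i}(k_{s})$ and $c_{s}$) and derives the identical strict descent $\pi_{i}(k_{s+1})<c_{s}\le\pi_{i}(k_{s})$ over each lap, so the two presentations are the same proof.
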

\begin{proof}
The token of Alice starts on $i\in V_{0}$. If Bob moves his token
to a vertex in either $K^{Y}(\pi_{i})$ or $K^{Z}(\pi_{i}^{-1})$,
then Alice stops at the next turn and wins immediately. Since Bob
does not have at all the opportunity to stop and win the game (since\ ${\cal F}^{1}\cap{\cal V}^{1}=\emptyset$),
we are left with the question whether there is an infinite play in
which the token of Bob never passes through $K^{Y}(\pi_{i})$ or $K^{Z}(\pi_{i}^{-1})$.
We define a potential $p\colon V_{1}\to\mathbb{N}$ over the possible
locations of Bob's token, showing that it never increases along this
hypothetical infinite play, and strictly decreases frequently, a contradiction.

Let $P_{i,\ell}^{Y}$ be the only maximal path in $Y$ that goes from
$x_{\ell}$ to $y_{\pi_{i}(\ell)}$ and does not contain any vertex
in $K^{Y}(\pi_{i}^{Y})$. Similarly, for the second permutation gadget,
let $P_{i,\ell}^{Z}$ be the only maximal path in $Z$ that goes from
$z_{\pi_{i}(\ell)}$ to $w_{\ell}$ and does not contain any vertex
in $K^{Z}(\pi_{i}^{Z})$. For every vertex $u$ along the paths $P_{i,\ell}^{Y}$
and $P_{i,\ell}^{X}$ (including the endpoints $x_{\ell}$, $y_{\pi_{i}(\ell)}$,
$z_{\pi_{i}(\ell)}$ and $w_{\ell}$), let $p(u)=\pi_{i}(\ell)$.
The only possible moves are either
\begin{enumerate*}[label=(\alph*)]
\item along a path $P_{i,\ell}^{Y}$ or $P_{i,\ell}^{Z}$, where the potential
remains constant by definition, or
\item from $y_{\ell}$ to $z_{\ell'}$, where $p(y_{\ell})=\ell'\leq\ell=p(z_{\ell'})$,
or
\item \label{enu:return-edge}from $w_{m+j}$ to $x_{k}$.
\end{enumerate*}
In case \ref{enu:return-edge}, we have $p(w_{m+k})=\pi_{i}(m+k)$
and $p(x_{j})=\pi_{i}(j)$. Since there are no invalid triangles and
$\mathbf{B}[k,j]=1$, necessarily $\mathbf{A}[i,k]\leq\mathbf{C}[i,j]$,
so $\pi_{i}(m+k)<\pi_{i}(j)$, and the potential strictly decreases.
Furthermore, moves of type \ref{enu:return-edge} occur frequently
since without them the configuration graph is acyclic.
\end{proof}

\bibliographystyle{plain}
\bibliography{references}

\appendix

\section{Proof of properties of reachability games}

\label{sec:apx-games}
\begin{proof}[Proof of Proposition~\ref{prop:determinacy}]
Each property is proven below.
\begin{itemize}
\item[\ref{enu:stable}]  Take $s(\sigma)\in{\cal U}$ if $\sigma\in\lifts^{P}({\cal U})\cap{\cal V}^{P}\cap{\cal F}^{1-P}$
(by definition of $\lifts^{P}$, a move $(\sigma,\sigma')\in{\cal E}$
with $\sigma'\in{\cal U}$ exists) and $s(\sigma)=\bot$ otherwise.
Let $\pi=\sigma_{0}\sigma_{1}\cdots$ be a play starting with $\sigma_{0}\in{\cal U}$
and conforming to $s$. Then, the play~$\pi$ remains on ${\cal U}$,
and it is either winning for $P$ or infinite. Indeed, if $\sigma_{i}\in{\cal U}\subseteq\lifts^{P}({\cal U})$,
then either $\sigma_{i}$ is the final configuration in $\pi$ and
$\sigma_{i}\in{\cal F}^{P}$, or $\sigma_{i+1}\in{\cal U}$ for the
next configuration $\sigma_{i+1}$.
\item[\ref{enu:progress-measure}]  For $\sigma\in{\cal V}^{P}$, let $s(\sigma)=\bot$ if $p(\sigma)=0$,
and take $s(\sigma)$ such that $p(s(\sigma))<\liftp^{P}(p)(\sigma)\leq p(\sigma)$
if $1\leq p(\sigma)<\infty$ (such a $s(\sigma)$ exists by definition
of $\liftp^{P}$). For any play $\pi=\sigma_{0}\sigma_{1}\cdots$
conforming to $s$ with $p(\sigma_{0})<\infty$, the value $p(\sigma_{i})$
decreases strictly with $i$, so $\pi$ is finite. Let $\sigma_{\ell}$
be the last configuration of $\pi$. If $\sigma_{\ell}\in{\cal V}^{P}$,
then $p(\sigma_{\ell})=0$, otherwise $\sigma_{\ell}\in{\cal V}^{1-P}$
and $p(\sigma_{\ell})<\infty$: in either case $\sigma_{\ell}\in{\cal F}^{P}$
follows.
\item[\ref{enu:potential-fixpoint}]  Observe that $\liftp^{P}(p)$ is defined equivalently by the equations
$\{\sigma\mid\liftp^{P}(p)(\sigma)<k+1\}=\lifts^{P}(\{\sigma\mid p(\sigma)<k\})$
for $k\in\mathbb{N}$. Hence, construct recursively the sets ${\cal W}_{<k+1}^{P}=\lifts^{P}({\cal W}_{<k}^{P})$,
for every $k\in\mathbb{N}$, starting with ${\cal W}_{<0}^{P}=\emptyset$.
Clearly ${\cal W}_{<0}^{P}\subseteq{\cal W}_{<1}^{P}$, and by induction
${\cal W}_{<k}^{P}\subseteq{\cal W}_{<k+1}^{P}$ for every $k\in\mathbb{N}$
since $\lifts^{P}$ is monotone. Now define ${\cal W}_{k}^{P}={\cal W}_{<k+1}^{P}\setminus{\cal W}_{<k}^{P}$,
for $k\in\mathbb{N}$, observing that they are pairwise disjoint,
and let $r^{P}(\sigma)=k$, if $\sigma\in{\cal W}_{k}^{P}$ for some
$k$, or $r^{P}(\sigma)=\infty$ otherwise. The potential $r^{P}$
is the unique solution of the equation $\liftp^{P}(p)=p$ by construction.
Indeed, $\{\sigma\mid r^{P}(\sigma)<k+1\}=\lifts^{P}(\{\sigma\mid r^{P}(\sigma)<k\})$
for every $k\in\mathbb{N}$, so $\liftp^{P}(r^{P})=r^{P}$. Suppose
to have a distinct solution $\liftp^{P}(p)=p$ and take the smallest
$k$ such that $\{\sigma\mid p(\sigma)=k\}\neq{\cal W}_{k}^{P}$.
Then $\{\sigma\mid p(\sigma)<k\}={\cal W}_{<k}^{P}$ so $\{\sigma\mid p(\sigma)<k+1\}=\lifts^{P}(\{\sigma\mid p(\sigma)<k\})=\lifts^{P}({\cal W}_{<k}^{P})={\cal W}_{<k+1}^{P}$
and $\{\sigma\mid p(\sigma)=k\}=\{\sigma\mid p(\sigma)<k+1\}\setminus{\cal W}_{<k}^{P}={\cal W}_{k}^{P}$.
Since $r^{P}$ is a progress measure, $\supp(r^{P})\subseteq{\cal W}^{P}$.
Observe that $\lifts^{1-P}({\cal V}\setminus\supp(r^{P}))={\cal V}\setminus\lifts^{P}(\supp(r^{P}))={\cal V}\setminus\supp(r^{P})$,
so ${\cal V}\setminus\supp(r^{P})$ is closed for $1-P$ and contained
in ${\cal S}^{1-P}$. As ${\cal W}^{P}$ and ${\cal S}^{1-P}$ are
clearly disjoint, we get the statement.
\item[\ref{enu:set-fixpoint}]  We already showed that ${\cal W}^{P}$ and ${\cal S}^{P}$ are fixpoints
of $\lifts^{P}$. Being closed, any other fixpoint is contained in
${\cal S}^{P}$ by \ref{enu:stable}, so ${\cal S}^{P}$ is the greatest
fixpoint. By induction ${\cal W}_{<k}^{P}$ is contained in every
fixpoint, since ${\cal W}_{<0}^{P}=\emptyset$ and ${\cal W}_{<k+1}^{P}=\lifts^{P}({\cal W}_{<k}^{P})$.
Thus ${\cal W}^{P}$ is the least fixpoint.
\item[\ref{enu:algorithm}]  Let $\delta^{+}(\sigma)$ be the number of moves $(\sigma,\sigma')\in{\cal E}$,
and let $c_{k}(\sigma)\leq\delta^{+}(\sigma)$ be the number of moves
$(\sigma,\sigma')\in{\cal E}$ such that $\sigma'\in{\cal W}_{<k}^{P}$.
We can characterize ${\cal W}_{<k+1}^{P}$ as follows: for $\sigma\in{\cal V}_{P}$,
we have $\sigma\in{\cal W}_{<k+1}^{P}$ iff $\sigma\in{\cal F}^{P}$
or $c_{k}(\sigma)>0$, while for $\sigma\in{\cal V}_{1-P}$, we have
$\sigma\in{\cal W}_{<k+1}^{P}$ iff $\sigma\in{\cal F}^{P}$ and $c_{k}(\sigma)=\delta^{+}(\sigma)$.
Maintain a counter $c\colon{\cal V}\to\mathbb{N}$. Start with $c(\sigma)=c_{0}(\sigma)=0$
for every $\sigma\in{\cal V}$, and compute the set ${\cal W}_{0}^{P}=({\cal V}^{P}\cap{\cal F}^{P})\cup\{\sigma\in{\cal V}^{1-P}\cap{\cal F}^{P}\mid\delta^{+}(\sigma)=0\}$
in $O(|{\cal V}|)$ time. Then, for each $k=1,\dots,|{\cal V}|-1$,
compute $c_{k}$ and ${\cal W}_{k}^{P}$ as follows: for each move
$(\sigma,\sigma')\in{\cal E}$ with $\sigma'\in{\cal W}_{k-1}^{P}$,
increase the value of $c(\sigma)$ by one, so that at the end $c(\sigma)=c_{k}(\sigma)$
for every $\sigma\in{\cal V}$. If a configuration $\sigma$ satisfies
for the first time the condition $c_{k}(\sigma)>0$ (if $\sigma\in{\cal V}^{P}$)
or $c_{k}(\sigma)=\delta^{+}(\sigma)$ (if $\sigma\in{\cal V}^{1-P}\cap{\cal F}^{P}$),
then add $\sigma$ to ${\cal W}_{k}^{P}$. Since visiting a move is
done in constant time, and each move is visited at most once, this
phase requires $O({\cal E})$ time. The total time needed is then
$O(|{\cal V}|+|{\cal E}|)$.
\end{itemize}
\end{proof}

\section{Equivalence with classical definition of simulation preorder}

\label{sec:apx-simulation-classical}
\begin{defn}[Simulation preorder, classical definition]
\label{def:simulation-classical}A binary relation $R\subseteq S\times S$
is a \emph{simulation} if, for every $(s,t)\in R$, we have that (a)
$s$ and $t$ have the same label $L(s)=L(t)$, and (b) for every
transition $(s,s')\in T$ there is a transition $(t,t')\in T$ such
that $(s',t')\in R$.

For $s,t\in S$, we say that $t$ \emph{simulates} $s$ (written $s\preceq_{s}t$)
if there exists a simulation $R$ with $(s,t)\in R$.\end{defn}
\begin{prop}[\cite{Henzinger1995,Etessami2005,Cerny2010}]
\label{lem:sim-2pw}Definition~\ref{def:simulation-game} and Definition~\ref{def:simulation-classical}
of simulation preorder are equivalent.\end{prop}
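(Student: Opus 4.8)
The plan is to prove the two inclusions between the game relation $\{(s,t):\langle 0,s,t\rangle\in\mathcal{S}^1\}$ of Definition~\ref{def:simulation-game} and the coinductive relation of Definition~\ref{def:simulation-classical}, reading off the answer from the fixpoint characterizations in Proposition~\ref{prop:determinacy}. Throughout I would use three facts about the simulation game $\mathcal{G}=\mathcal{G}(G,G)$: the moves out of an Alice configuration $\langle 0,s,t\rangle$ are precisely the $\langle 1,t,s'\rangle$ with $(s,s')\in T$; the moves out of a Bob configuration $\langle 1,t,s\rangle$ are precisely the $\langle 0,s,t'\rangle$ with $(t,t')\in T$; and $\mathcal{F}^1=\{\langle 0,s,t\rangle:L(s)=L(t)\}\subseteq\mathcal{V}^0$, so that no Bob configuration is ever winning-final for Bob.

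For ``classical $\Rightarrow$ game'', given a simulation $R$ with $(s,t)\in R$ I would exhibit the set
\begin{align*}
\mathcal{U}={} & \{\langle 0,s',t'\rangle:(s',t')\in R\}\\
 & {}\cup\{\langle 1,t',s''\rangle:(s',s'')\in T\text{ for some }s'\text{ with }(s',t')\in R\}
\end{align*}
and verify that $\mathcal{U}\subseteq\lifts^1(\mathcal{U})$, i.e.\ that $\mathcal{U}$ is closed for Bob; Proposition~\ref{prop:determinacy}\ref{enu:stable} then gives Bob a surviving strategy from $\langle 0,s,t\rangle\in\mathcal{U}$, so $\langle 0,s,t\rangle\in\mathcal{S}^1$. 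The closedness check splits along the two cases in the definition of $\lifts^1$: at an Alice configuration $\langle 0,s',t'\rangle$ the conjunctive clause holds because $L(s')=L(t')$ by clause~(a) of a simulation (placing $\langle 0,s',t'\rangle$ in $\mathcal{F}^1$) and because every successor $\langle 1,t',s''\rangle$ lies in $\mathcal{U}$ by construction; at a Bob configuration $\langle 1,t',s''\rangle$ the disjunctive clause holds because clause~(b) of a simulation supplies $(t',t'')\in T$ with $(s'',t'')\in R$, so the successor $\langle 0,s'',t''\rangle$ lies in $\mathcal{U}$.

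For ``game $\Rightarrow$ classical'', given $\langle 0,s,t\rangle\in\mathcal{S}^1$ I would set $R=\{(s',t'):\langle 0,s',t'\rangle\in\mathcal{S}^1\}$ (so $(s,t)\in R$) and check that $R$ is a simulation, using that $\mathcal{S}^1$ is a fixpoint of $\lifts^1$ (Proposition~\ref{prop:determinacy}\ref{enu:set-fixpoint}) and hence closed for Bob. For $(s',t')\in R$, the conjunctive clause applied to $\langle 0,s',t'\rangle\in\lifts^1(\mathcal{S}^1)$ yields both $L(s')=L(t')$ (clause~(a)) and, for each $(s',s'')\in T$, that $\langle 1,t',s''\rangle\in\mathcal{S}^1$; since $\langle 1,t',s''\rangle\notin\mathcal{F}^1$, the disjunctive clause applied to this configuration forces some successor $\langle 0,s'',t''\rangle\in\mathcal{S}^1$ with $(t',t'')\in T$, that is, $(s'',t'')\in R$ --- which is clause~(b).

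I expect the only real obstacle to be careful bookkeeping of the 2TRG conventions rather than any genuine difficulty: one must keep straight the turn/token swap in the moves and exploit the asymmetry $\mathcal{F}^1\subseteq\mathcal{V}^0$, so that at a Bob configuration the ``$\sigma\in\mathcal{F}^P$'' disjunct is vacuous and ``Bob survives'' reduces to ``Bob has a move to a good configuration''. Once the definitions are unfolded, both inclusions follow mechanically from Proposition~\ref{prop:determinacy} with no further combinatorial content.
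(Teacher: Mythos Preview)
Your proposal is correct and matches the paper's own proof essentially line for line: the paper builds the same set $\mathcal{U}$ from a simulation $R$ and checks it is closed for Bob, and conversely defines $R=\{(s,t)\mid\langle 0,s,t\rangle\in\mathcal{S}^{1}\}$ and verifies the simulation clauses using that $\mathcal{S}^{1}$ is a fixpoint of $\lifts^{1}$, exploiting $\mathcal{F}^{1}\subseteq\mathcal{V}^{0}$ exactly as you note.
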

\begin{proof}
($\implies$) Take a simulation relation $R\subseteq S\times S$ and
define 
\[
{\cal U}=\{\langle0,s,t\rangle\mid(s,t)\in R\}\cup\{\langle1,t,s'\rangle\mid\exists s\text{ such that }(s,t)\in R\text{ and }(s,s')\in T\}.
\]
We prove that ${\cal U}$ is closed on $({\cal G},{\cal F})$, so
${\cal U}\subseteq{\cal S}^{1}$. Take $\langle0,s,t\rangle\in{\cal U}$.
Since $(s,t)\in R$, we have $L(s)=L(t)$ by definition of simulation,
so $\langle0,s,t\rangle\in{\cal F}^{1}$. Moreover, for every $(\langle0,s,t\rangle,\langle1,t,s'\rangle)\in{\cal E}$
we have $(s,s')\in T$ so $\langle1,t,s'\rangle\in{\cal U}$ by construction.
Now take any $\langle1,t,s'\rangle\in{\cal U}$ and let $s\in S$
be such that $(s,t)\in R$ and $(s,s')\in T$. By definition of simulation,
there is a $t'\in S$ such that $(t,t')\in T$ and $(s',t')\in R$.
Hence, $(\langle1,t,s'\rangle,\langle0,s',t'\rangle)\in{\cal E}$
with $\langle0,s',t'\rangle\in{\cal U}$.

($\impliedby$) We prove that the relation $R=\{(s,t)\mid\langle0,s,t\rangle\in{\cal S}^{1}\}$
is a simulation. Suppose $(s,t)\in R$ so $\langle0,s,t\rangle\in{\cal S}^{1}$.
Observe that $\langle0,s,t\rangle\in{\cal F}^{1}$, so $L(s)=L(t)$,
otherwise Alice wins by stopping on $\langle0,s,t\rangle$. For any
edge $(s,s')\in T$, we have $(\langle0,s,t\rangle,\langle1,t,s'\rangle)\in{\cal E}$
and, since $\langle0,s,t\rangle\in{\cal S}^{1}$, also $\langle1,t,s'\rangle\in{\cal S}^{1}$.
However, since $\langle1,t,s'\rangle\notin{\cal F}^{1}$, then there
exists a $t'\in S$ such that $(\langle1,t,s'\rangle,\langle0,s',t'\rangle)\in{\cal E}$
and $\langle0,s',t'\rangle\in{\cal S}^{1}$. In particular, $(t,t')\in T$
and $(s',t')\in R$.
\end{proof}

\section{Equivalence of simulation games and 2TRGs}

\label{sec:apx-2trg-sim-equiv}
\begin{proof}[Proof of Theorem~\ref{thm:2trg-sim-equiv}, continues]
We need to prove that ${\cal S}_{{\cal G},{\cal F}}^{1}={\cal S}_{{\cal G}',{\cal F}'}^{1}$.

($\supseteq$) Define the potential $p$ on ${\cal G}'$ as follows
\[
\begin{aligned}p(\langle P,u,v\rangle) & =r_{{\cal G},{\cal F}}^{0}(\langle P,u,v\rangle)+1 &  & \text{for }\langle P,u,v\rangle\in{\cal V}\\
p(\langle1,v,u^{*}\rangle) & =0 &  & \text{for }\langle0,u,v\rangle\in{\cal F}^{0}\\
p(\langle P,x,y\rangle) & =\infty &  & \text{in any other case}
\end{aligned}
.
\]
Observe that $p$ is a progress measure on $({\cal G}',{\cal F}')$
for Alice. Thus ${\cal W}_{{\cal G},{\cal F}}^{0}=\supp(p)\cap{\cal V}\subseteq\supp(p)\subseteq{\cal W}_{{\cal G}',{\cal F}'}^{0}$.

($\subseteq$) Define the set ${\cal U}\subseteq{\cal V}_{{\cal G}'}$
as follows
\[
\begin{aligned}\langle P,u,v\rangle & \in{\cal U} &  & \text{for }\langle P,u,v\rangle\in{\cal S}_{{\cal G},{\cal F}}^{1}\\
\langle1,v,u^{*}\rangle & \in{\cal U} &  & \text{for }\langle0,u,v\rangle\in{\cal F}^{1}\\
\langle0,u,u\rangle,\langle0,u^{*},u^{*}\rangle & \in{\cal U} &  & \text{for }u\in V_{0}\\
\langle1,u,u'\rangle & \in{\cal U} &  & \text{for }(u,u')\in E_{0}\\
\langle P,x,y\rangle & \notin{\cal U} &  & \text{in any other case}
\end{aligned}
.
\]
Observe that ${\cal U}$ is closed on $({\cal G}',{\cal F}')$ for
Bob. Thus ${\cal S}_{{\cal G},{\cal F}}^{1}={\cal U}\cap{\cal V}\subseteq{\cal U}\subseteq{\cal S}_{{\cal G}',{\cal F}'}^{1}$.
\end{proof}

\section{Proof of dicut decomposition of reachability games}

\label{sec:apx-dicut}
\begin{proof}[Proof of Lemma~\ref{lem:dicut}, continues]

Denote with $\lifts_{H}^{P}$ and $\lifts_{T}^{P}$ the operators
$\lifts^{P}$ on the games $({\cal G}[{\cal V}_{H}],{\cal F}[{\cal V}_{H}])$
and $({\cal G}[{\cal V}_{T}],{\cal F}_{*}[{\cal V}_{H}])$. We first
need to show that $\lifts_{H}^{P}({\cal U}\cap{\cal V}_{H})=\lifts^{P}({\cal U})\cap{\cal V}_{H}$,
for any ${\cal U}\subseteq{\cal V}$, and that $\lifts_{T}^{P}({\cal U}_{T})=\lifts^{P}({\cal U}_{T}\cup{\cal S}_{H}^{P})\cap{\cal V}_{T}$
for any ${\cal U}_{T}\subseteq{\cal V}_{T}$.

For ${\cal U}\subseteq{\cal V}$ and $\sigma\in{\cal V}_{H}\cap{\cal V}^{P}$
we have

\begin{eqnarray*}
\sigma\in\lifts_{H}^{P}({\cal U}\cap{\cal V}_{H}) & \iff & \sigma\in{\cal F}^{P}\lor{\textstyle \bigvee_{(\sigma,\sigma')\in{\cal E}}}\:\sigma'\in{\cal U}\cap{\cal V}_{H}\\
 & \iff & \sigma\in\lifts^{P}({\cal U})
\end{eqnarray*}
so $\lifts_{H}^{P}({\cal U}\cap{\cal V}_{H})\cap{\cal V}^{P}=\lifts^{P}({\cal U})\cap{\cal V}_{H}\cap{\cal V}^{P}$.
We obtain $\lifts_{H}^{P}({\cal U}\cap{\cal V}_{H})=\lifts^{P}({\cal U})\cap{\cal V}_{H}$
by applying de Morgan laws.

For ${\cal U}_{T}\subseteq{\cal V}_{T}$ and $\sigma\in{\cal V}_{T}\cap{\cal V}^{P}$
we have 
\begin{eqnarray*}
\sigma\in\lifts_{T}^{P}({\cal U}_{T}) & \iff & \sigma\in{\cal F}_{*}^{P}\;\lor\;{\textstyle \bigvee_{(\sigma,\sigma')\in{\cal E}}}\:\sigma'\in{\cal U}_{T}\\
 & \iff & \sigma\in\lifts^{P}({\cal S}_{H}^{P})\;\lor\;{\textstyle \bigvee_{(\sigma,\sigma')\in{\cal E}}}\:\sigma'\in{\cal U}_{T}\\
 & \iff & \sigma\in{\cal F}^{P}\;\lor\;{\textstyle \bigvee_{(\sigma,\sigma')\in{\cal E}}}\:\sigma'\in{\cal S}_{H}^{P}\;\lor\;{\textstyle \bigvee_{(\sigma,\sigma')\in{\cal E}}}\:\sigma'\in{\cal U}_{T}\\
 & \iff & \sigma\in{\cal F}^{P}\;\lor\;{\textstyle \bigvee_{(\sigma,\sigma')\in{\cal E}}}\:\sigma'\in{\cal U}_{T}\cup{\cal S}_{H}^{P}\\
 & \iff & \sigma\in\lifts^{P}({\cal U}_{T}\cup{\cal S}_{H}^{P})
\end{eqnarray*}
so $\lifts_{T}^{P}({\cal U}_{T})\cap{\cal V}^{P}=\lifts^{P}({\cal U}_{T}\cup{\cal S}_{H}^{P})\cap{\cal V}_{T}\cap{\cal V}^{P}$.
We obtain $\lifts_{T}^{P}({\cal U}_{T})=\lifts^{P}({\cal U}_{T}\cup{\cal S}_{H}^{P})\cap{\cal V}_{T}$
by applying de Morgan laws. 
\begin{enumerate}
\item ${\cal S}_{H}^{P}\cup{\cal S}_{T}^{P}$ is closed on $({\cal G},{\cal F})$:
\begin{eqnarray*}
{\cal S}_{H}^{P}\cup{\cal S}_{T}^{P} & = & \lifts_{H}^{P}({\cal S}_{H}^{P})\cup\lifts_{T}^{P}({\cal S}_{T}^{P})\\
 & = & \left[\lifts^{P}({\cal S}_{H}^{P})\cap{\cal V}_{H}\right]\cup\left[\lifts^{P}({\cal S}_{T}^{P}\cup{\cal S}_{H}^{P})\cap{\cal V}_{T}\right]\\
 & \subseteq & \lifts^{P}({\cal S}_{T}^{P}\cup{\cal S}_{H}^{P})
\end{eqnarray*}
so ${\cal S}_{H}^{P}\cup{\cal S}_{T}^{P}\subseteq{\cal S}^{P}$,
\item ${\cal S}^{P}\cap{\cal V}_{H}$ is closed on $({\cal G}_{H},{\cal F}_{H})$:
\begin{eqnarray*}
{\cal S}^{P}\cap{\cal V}_{H} & = & \lifts^{P}({\cal S}^{P})\cap{\cal V}_{H}\\
 & = & \lifts_{H}^{P}({\cal S}^{P}\cap{\cal V}_{H})
\end{eqnarray*}
so ${\cal S}^{P}\cap{\cal V}_{H}\subseteq{\cal S}_{H}^{P}$ and, together
with (1), ${\cal S}^{P}\cap{\cal V}_{H}={\cal S}_{H}^{P}$,
\item ${\cal S}^{P}\cap{\cal V}_{T}$ is closed on $({\cal G}_{T},{\cal F}_{T})$:
\begin{eqnarray*}
{\cal S}^{P}\cap{\cal V}_{T} & = & \lifts^{P}({\cal S}^{P})\cap{\cal V}_{T}\\
 & = & \lifts^{P}({\cal S}_{H}^{P}\cup({\cal S}^{P}\cap{\cal V}_{T}))\cap{\cal V}_{T}\\
 & = & \lifts_{T}^{P}({\cal S}^{P}\cap{\cal V}_{T})
\end{eqnarray*}
so ${\cal S}^{P}\cap{\cal V}_{T}\subseteq{\cal S}_{T}^{P}$ and, together
with (1), ${\cal S}^{P}\cap{\cal V}_{T}={\cal S}_{T}^{P}$.\end{enumerate}
\end{proof}

\end{document}